\pgfplotsset{width=7cm, compat=1.10}
\crefname{table}{table}{tables}
\Crefname{table}{Table}{Tables}
\crefname{figure}{figure}{figures}
\Crefname{figure}{Figure}{Figures}
\crefname{section}{section}{sections}
\Crefname{section}{Section}{Sections}
\crefname{claim}{claim}{claims}
\Crefname{claim}{Claim}{Claims}
\Crefname{question}{Question}{Questions}
\Crefname{appendix}{Appendix}{Appendices}
\title{Entropy of Independent Experiments, Revisited}
\author{Maciej Skorski}
\institute{IST Austria}
\begin{document}

\maketitle

\begin{abstract}
The weak law of large numbers implies that, under mild assumptions on the source, the Renyi entropy per produced symbol converges (in probability) towards the Shannon entropy rate. 

This paper quantifies the speed of this convergence for sources with independent (but not iid) outputs, generalizing and improving the result of  Holenstein and Renner (IEEE Trans. Inform. Theory, 2011). 
\begin{enumerate}[(a)]
\item we characterize sources with \emph{slowest convergence} (for given entropy): their outputs are mixtures of a uniform distribution and a unit mass.
\item based on the above characterization, we establish faster convergences
in \emph{high-entropy} regimes.
\end{enumerate}

We discuss how these improved bounds may be used to better quantify security of outputs of random number generators.

In turn, the characterization of "worst" distributions can be used 
to derive sharp "extremal" inequalities between Renyi and Shannon entropy.

The main technique is \emph{non-convex programming}, used to characterize distributions of possibly large exponential moments under certain entropy.

 

\end{abstract}

\begin{keywords}
Asymptotic Equipartition Property, Smooth Entropy, Exponential Moments Method
\end{keywords}

\section{Introduction}



It is well known that under mild assumptions on the source (independent and identical outputs~\cite{Renner2004}, independent but not identical outputs~\cite{HolensteinRenner2011}, ergodicity~\cite{Schoenmakers2007}, strong converse property~\cite{Koga2013})
the rate of min-entropy (in fact, Renyi entropy of any positive order) converges in probability\footnote{For the convergence in probability we understand the entropy as conditioned over some set of probability $1-o(1)$. This is strictly related to smooth entropy~\cite{Renner2004}.}.
 towards the Shannon entropy rate. More precisely, for the source producing outputs $X_1,X_2,\ldots$ and $x\gets X^n \overset{def}{=} X_1,\ldots,X_n$, under these assumptions for $n\to\infty$ we have
\begin{align}\label{eq:convergence}
 \frac{1}{n}\log\frac{1}{P_{X^{n}}(x)} = \frac{1}{n}H(X^{n}) + o(1)
 \quad \text{w.p. } 1-o(1)
\end{align}
which can be seen a demonstration of the weak law of large numbers\footnote{By the definition of Shannon entropy we have $\mathbb{E}\log\frac{1}{P_{X^n}(\cdot)} = H(X^n)$.}. 

In information theory results of this sort are often refereed to  as generalizations of the \emph{Asymptotic Equipartition Property}, because they establish 
that with overwhelming probability the sequences produced by the source
are (roughly) equally likely.

Under general assumptions, there is basically not much more to say about \eqref{eq:convergence}. This paper is concerned with quantitative bounds, which are possible when the source produces independent outputs.
Thus, we are interested in inequalities 
\begin{align}\label{eq:convergence_2}
\Pr_{x_1,\ldots,x_n\gets X_1,\ldots,X_n}\left[\left| \sum_{i=1}^{n}\log\frac{1}{P_{X_{i}}(x_i)} - \sum_{i=1}^{n} H(X_i)\right| > n\delta\right] \leqslant \epsilon
\end{align}
where $X_i$ are independent random variables with finitely many outcomes\footnote{Our results are valid when $X_i$ have different alhapbets,
however for the clarity of the presenation we later assume that
they all are over some fixed $\mathcal{X}$.}. 
Bounds of this sort find applications in cryptography,
quantifying the the conversion between Shannon entropy (more convenient to quantify) and the min-entropy (required for security), for a series of experiments. One example are theoretical constructions of pseudorandom generators~\cite{HILL99,Holenstein2006}, which use a variant of \eqref{eq:convergence_2}. Another important application is a justification of the entropy evaulation methodology for random number generators~\cite{AIS31,Barker2016NISTDS}, where best available tests quantify Shannon entropy~\cite{Maurer92a,Coron1999}.


Good bounds of the form \eqref{eq:convergence_2} are obtained by techniques from \emph{large deviations theory}, applied to the
random variables $Z_i = \log\frac{1}{P_{X_{i}}(\cdot)}$ called
 \emph{surprises} of $X_i$. 
Note that $Z_i$ are \emph{unbounded}, hence standard inequalities like Chernoff-Hoeffding bounds don't apply. 
The solution\footnote{There are other
approaches, for example ignoring large surprises or using the concept of typical sets, but they lead to worse bounds as discussed in~\cite{HolensteinRenner2011}.
} is to work directly with \emph{moment generating functions} of each $Z_i$. If we know upper bounds on $\mathbb{E}\exp(t Z_i)$,
where $t$ is a parameter, then \eqref{eq:convergence_2} follows
by the Markov inequality $\Pr[\sum_{i}Z_i \geqslant n\delta]\leqslant
\prod_{i}\exp\left(tZ_i\right)\cdot \exp(-tn\delta)$, optimized over $t$.
In bounds \eqref{eq:convergence_2} we may also want to capture
 some information about the Shannon entropy of the source. Technically, the problem then reduces to
\begin{mdframed}[innertopmargin=2pt, leftline=true, topline = true]
\begin{quote}
\textbf{(Problem)} For any alphabet $\mathcal{X}$,
find best possible bounds on the exponential moments of the surprise of 
a distribution $X$ over $\mathcal{X}$
\begin{align*}
\underset{X}{\textrm{max}}\ \mathbb{E}_{x\sim X}\exp\left(t\log \frac{1}{P_X(x)}\right) \leqslant \quad ?
\end{align*}
assuming that $X$ has Shannon entropy $k$, for parameters $k$ and $t$.
\end{quote}
\end{mdframed}
We follow this approach and derive best possible bounds for this technique.

\subsection{Related works and our results}

\subsubsection{Related works}
Best bounds of the form \eqref{eq:convergence_2} so far were due to Holenstein and Renner~\cite{HolensteinRenner2011}. Their argument uses  calculus to derive bounds on the exponential moments of the surprise of $X$. This leads to
$\epsilon = \exp\left(-\Omega(1)\cdot \frac{n\delta^2}{\log^2|\mathcal{X}|} \right)$. However the matching (up to a constnat in the exponent) lower bound $\epsilon = \exp\left(-O(1)\cdot \frac{n\delta^2}{\log^2|\mathcal{X}|} \right)$ is known only for low or moderate entropy rates\footnote{Note that in~\cite{HolensteinRenner2011} the bounds are shown to be optimal only for moderate entropy, namely of about $\frac{1}{2}\log|\mathcal{X}|$ per sample (cf the proof of Theorem 3 in~\cite{HolensteinRenner2011}.}.

\subsubsection{Our results and techniques in a nutshell}
We provide explicit bounds in terms of entropies of $X_i$, instead of the alphabet as in~\cite{HolensteinRenner2011}. Roughly speaking, we replace 
the factor logarithmic in the alphabet by the entropy efficiency.

In particular, we obtain significant improvements 
in \emph{high-entropy regimes} where the deficiency $\Delta_i = H_0(X_i) - H(X_i)$ is relatively small, for example a fraction of the length (here $H_0(\cdot)$ is the logarithm of the support of $X_i$). As a consequence, \eqref{eq:convergence} converges faster when the distributions $X_i$ have high entropy; alternatively, we get better accuracy $\delta$ with fewer samples.
We summarize our bounds in \Cref{tab:comparison}.
\begin{table}
\captionsetup{font=small}
\centering
\renewcommand*{\arraystretch}{1.2}
\resizebox{0.99\textwidth}{!}{
\begin{tabular}{|l|r|r|l|l|}
\hline
author/reference & number of samples $n$ & regime & technique \\
\hline
\cite{HolensteinRenner2011} &  $\log^2N \cdot \delta^{-2}\log\frac{1}{\epsilon}$ & & bounds on exponential moments\\
\hline
this paper, \Cref{cor:sub_exp_tails_improved} (a) & $\log (\Delta N)\cdot \max\left(\delta^{-1}, \Delta\delta^{-2}\right)\log\frac{1}{\epsilon}$  &  $\Delta > \frac{1}{N}$ &
\multirow{2}{*}{\parbox{5cm}{ optimized
exponential moments \\ subexpoential tails}} \\
\cline{1-3}
this paper, \Cref{cor:sub_exp_tails_improved} (b) & $\max(\Delta\cdot \delta^{-1},1)\log\frac{1}{\epsilon}$  & $\Delta <\frac{1}{N}$ \\
\hline
\end{tabular}
}
\caption{Summary of our bounds and comparison with related works. The alphabet size is $|\mathcal{X}| = N$, the number of samples $n$ is such that 
\eqref{eq:convergence_2} holds with accuracy $\delta$ and error probability $\epsilon$, and defficiencies are are bounded by $\Delta$. Note that our bounds for the low entropy setting $\Delta = \log N$ reduce to \cite{HolensteinRenner2011}.}
\label{tab:comparison}
\end{table}

The bounds are relevant to random number generation, where 
we improve the known relation between min-entropy (which is the best notion to be used in vulnerability analysis) and Shannon entropy (much easier to estimate in practice~\cite{AIS31}). We discuss this application in the next paragraph. 

To quantify the convergence in terms of entropies of $X_i$, 
we use {non-convex optimization} to find \emph{best bounds on exponential moments} of $X_i$ under \emph{given entropy}. 
While the optimization is rather non-trivial, the extreme distribution has 
a simple shape: it combines a unit mass and a uniform distribution. This analysis not only guarantees that the bounds are best possible, but
as a byproduct solves the related problem of Renyi entropy maximization under fixed Shannon entropy. This way we obtain extremal inequalities between these entropies.

\subsubsection{Applications to random number generators}
The motivation for studying \emph{high entropy regimes} comes from \emph{true random number generators}. Roughly speaking, they are devices which postprocess samples from a physical entropy source into a sequence of almost independent and unbiased bits.
\begin{enumerate}[(a)]
\item \emph{output security}: output bits are required to have very high Shannon entropy\footnote{} per bit (e.g. more than 0.997) and no dependencies~\cite{AIS31}; for example, early Intel (hardware) generators were estimated to generate about 0.999 Shannon entropy per bit~\cite{jun1999intel}. 

To illustrate our bounds, 
suppose that outputs are generated in $8$ bits chunks. We have $|\mathcal{X}| = 2^8$ states and the Shannon entropy deficiency is $\Delta = (1-0.997)\cdot 8 = 0.024$. The security (min-entropy) implied by bounds in \Cref{tab:comparison} at the confidence $1-\epsilon$ where $\epsilon=2^{-60}$ is illustrated in \Cref{eq:numerical_convergence} below.

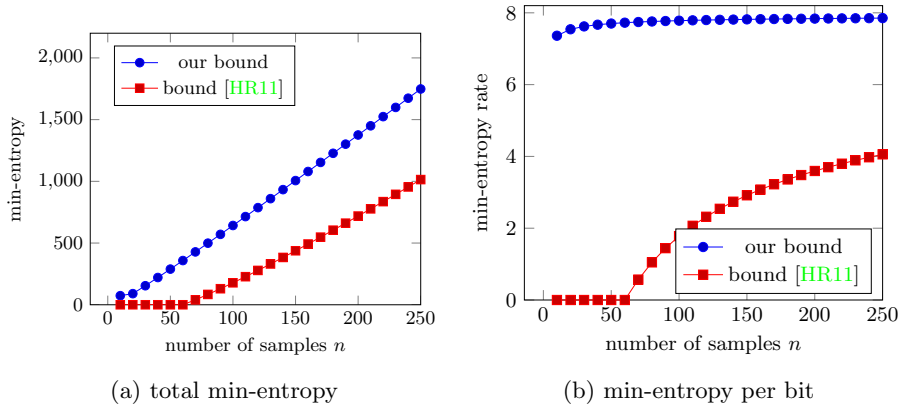
\begin{figure}[h!]
\begin{subfigure}[b]{.5\textwidth}
\resizebox{0.99\linewidth}{!}{
\begin{tikzpicture}
\begin{axis}[
    	title = { },  
    	ylabel = {min-entropy},
    	xlabel = {number of samples $n$},
    	ymin = 0, ymax = 2200,
    	xmax = 250,
	legend style = {at = {(axis cs:5,2100)}, anchor = north west},
    	legend entries = {our bound, bound \cite{HolensteinRenner2011}}
    	]
    	\addplot+[blue] table {plot1.csv};
    	\addplot+[red] table {plot2.csv};
    \end{axis}
\end{tikzpicture}
}
\caption{total min-entropy}
\label{fig:1}
\end{subfigure}
\begin{subfigure}[b]{.5\textwidth}
\resizebox{0.99\linewidth}{!}{
\begin{tikzpicture}
\begin{axis}[
    	title = {},  
    	ylabel = {min-entropy rate},
    	xlabel = {number of samples $n$},
    	ymin = 0, ymax = 8.2,
    	xmax = 250,
	legend style = {at = {(axis cs:242,0.17)}, anchor = south east},
    	legend entries = {our bound, bound \cite{HolensteinRenner2011}}
    	]
    	\addplot+[blue] table {plot3.csv};
    	\addplot+[red] table {plot4.csv};
    \end{axis}
\end{tikzpicture}
}
\caption{min-entropy per bit}
\label{fig:2}
\end{subfigure}
\caption{Numerical comparison of our and previous bounds, applied to 
the sequence of $n$ independent samples of very high Shannon entropy (required for the output of a true random number generator~\cite{AIS31}). Each sample is $8$-bit long, the Shannon entropy per bit is 0.997, the error probability $\epsilon=2^{-60}$. The vertical axes show the min-entropy (conditioned over a set of probability $1-\epsilon$).}
\label{eq:numerical_convergence}
\end{figure}
\item \emph{source health tests}: our bounds may be also applied to quantify decreases in the source quality (standards~\cite{Barker2016NISTDS,AIS31} require such countermeasures to be implemented). Namely, our bounds provide the min-entropy rate in the ideal case (independent samples), which upperbounds the actual rate. In practical designs one prefers sources with small entropy deficiency  \cite{Hennebert2013,PawlowskiJO15,Bedekar2015}, where our results offer 
more accurate estimates.
\end{enumerate}

\subsubsection{Applications to extremal inequalities for Renyi entropy}

As a byproduct of our analysis, we obtain sharp bounds on Renyi entropy for given Shannon entropy, in \emph{one-shot experiments} (as opposed to the previous application).
The motivation comes from cryptographic tasks such as key derivation, which demand "enough" Renyi entropy available. We ask
what can be said about Renyi entropy of a distribution, if only its Shannon entropy is known. Using our techniques, the precise answer can be given for any $\alpha>0$. Below in \Cref{fig:extreme_entropies} we illustrate such a bound for Renyi entropy of order $\alpha= 2$ (denoted by $H_2$). The Python script is included in \Cref{app:python}.
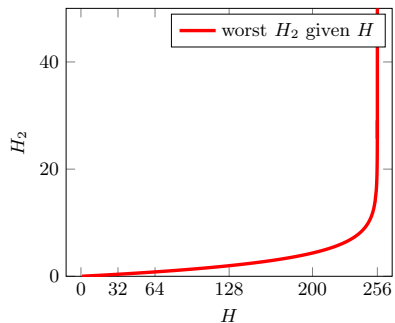
\begin{figure}[!h]
\centering
\begin{tikzpicture}[scale=0.8]
\begin{axis}[
		enlarge x limits = 0.05,
    	title = {},  
    	ylabel = {$H_2$},
    	xlabel = {$H$},
    	ymin = 0, ymax = 50,
    	xmin = 0, xmax = 256,
    	xtick = {0,32,64,128,200,256},
	legend style = {anchor = north east},
    	legend entries = {worst $H_2$ given $H$}
    	]
    	\addplot[red, ultra thick] table {extreme.csv};
    	\draw[red, ultra thick] (256,257)--(256,500);
    \end{axis}
\end{tikzpicture}
\caption{Smallest values for $H_2(X)$ when $H(X)$ is fixed (minimized over the choice $X$), for distributions over 256 bits.
Even with $H(X) = 255.999$ we could have only $H_2(X)=35.7$, so the convergence to full entropy is extremely slow. 
}
\label{fig:extreme_entropies}
\end{figure}

\subsubsection{Our results in detail}

\paragraph{Optimal bounds on exponential moments of surprises}

We compute the maximal value of the moment generating function of the surprise, when the distribution is over the finite space $\mathcal{X}$ and has a certain ammount of entropy $k$. 
Essentialy, we need to solve the following optimization task\footnote{This program is equivalent to what we announced in the introduction. 
}
\begin{align}\label{eq:maximize_MGF}
\begin{aligned}
\underset{X}{\mathrm{maximize}} && \mathbb{E}_{x\sim X}\exp\left(tH(X)-t\log\frac{1}{P_X(x)} \right) \\
\mathrm{s.t.} && H(X) = k.
\end{aligned}
\end{align}
over random variables $X$ taking values in $\mathcal{X}$,
where $t$ is a parameter.
Our main result characterizes the optimal solution to \eqref{eq:maximize_MGF}.
\begin{theorem}[Sharp surprise exponential moments]\label{thm:maximize_solution}
For any $t\geqslant -1$, the optimal solution to \eqref{eq:maximize_MGF} is given by
\begin{align}\label{eq:optimal_distribution}
P_X(x) = \left\{
\begin{array}{r}
\theta,  \ x=x_0 \\
\frac{1-\theta}{|\mathcal{X}|-1}, \ x\not=x_0\\
\end{array}
\right.
\end{align}
for some $\theta \in \left(\frac{1}{|\mathcal{X}|},1\right)$ and $x_0 \in \mathcal{X}$.
\end{theorem}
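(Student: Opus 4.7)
The plan is to reduce \eqref{eq:maximize_MGF} to a finite-dimensional non-convex program, apply Lagrange multipliers to restrict the support structure of any optimum, and then identify the exact configuration by a one-dimensional comparison.

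First, since $H(X)=k$ is fixed, the factor $\exp(tk)$ in the objective is constant, so \eqref{eq:maximize_MGF} is equivalent (up to this constant) to maximising $\sum_{x} P_X(x)^{1+t}$ over distributions with $\sum_x P_X(x)=1$ and $-\sum_x P_X(x)\log P_X(x)=k$. This feasible set is a level set of a strictly concave function, hence non-convex, so I will argue global optimality through the support sparsity that KKT forces rather than via convexity.

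Second, I would apply Lagrange multipliers to this reformulation. Writing $p_x = P_X(x)$, stationarity of the Lagrangian $\sum_x p_x^{1+t} - \lambda \sum_x p_x + \mu \sum_x p_x \log p_x$ at an interior optimum reads
\begin{align*}
(1+t)\, p_x^{t} + \mu \log p_x = \lambda - \mu
\end{align*}
at every point of the support. Setting $g(p) := (1+t) p^{t} + \mu \log p$, each $p_x$ in the support is a root of $g(p)=c$. A direct derivative computation gives $g'(p) = (t(1+t) p^{t} + \mu)/p$, and because $p \mapsto t(1+t) p^{t}$ is strictly monotone whenever $t \notin \{0,-1\}$, the function $g$ has at most one critical point in $(0,1)$ and therefore at most two roots. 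Consequently every optimum takes at most two distinct probability values, say $a>b$, realised by $j$ and $N-j$ atoms respectively, where $N=|\mathcal{X}|$; the boundary cases $t \in \{0,-1\}$ give trivial objectives that already match the claimed form.

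Third, it remains to pin down $j$. For each fixed $j$ the probability and entropy constraints leave $(a,b)$ on a one-parameter family, so the residual objective is a univariate function $F(j):= j\, a(j)^{1+t}+(N-j)\, b(j)^{1+t}$. I would extend $j$ to a continuous parameter, implicitly differentiate $a(j)$ and $b(j)$ from the two constraints, and show that $dF/dj$ has a constant sign on the interior $(1,N-1)$; the maximum is then forced to the endpoint $j=1$ (giving the single heavy atom with $\theta=a>1/N$) or $j=N-1$, the latter being excluded by a direct comparison exploiting $t\geqslant -1$. Combined with the two-value structure from Step~2, this yields the form \eqref{eq:optimal_distribution}. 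The hard part I expect is precisely this last monotonicity step: the implicit derivatives $da/dj, db/dj$ are ratios of $2 \times 2$ Jacobians built from the probability and entropy constraints, and establishing sign-definiteness of $dF/dj$ is where the hypothesis $t \geqslant -1$ must enter non-trivially, whereas the Lagrangian reduction and support-size analysis are comparatively routine.
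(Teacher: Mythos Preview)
Your Steps 1--2 (reduce to $\max \sum_x p_x^{1+t}$ under $H(p)=k$, then use first-order stationarity to conclude at most two distinct positive probability values) coincide with the paper's opening moves, modulo the paper's extra care in verifying LICQ and in first ruling out the flat distribution. The divergence is in Step 3. Rather than relaxing the heavy-atom count $j$ to a continuous parameter and differentiating $F(j)$, the paper invokes the \emph{second-order} KKT necessary conditions: the Hessian of the Lagrangian is diagonal with entries $t(1+t)p_i^{t-1}+\lambda_1 p_i^{-1}$, and testing against tangent directions $d$ supported on pairs of equal-probability coordinates forces these diagonal entries to be $\leqslant 0$. A sign-change argument then yields negative definiteness (for all but finitely many $t$), after which Schur-convexity of $L$ above a threshold and Schur-concavity below it force exactly one heavy atom. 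This sidesteps the implicit-differentiation computation you correctly flag as the hard part; your monotonicity route may well work, but the paper's second-order/majorization argument is what actually closes the case there.

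There is also a genuine gap in your plan: you tacitly assume full support when you write ``$j$ and $N-j$ atoms''. Your Lagrange analysis only constrains the \emph{positive} values of $p$, so a priori the optimum could live on a strict subset of $\mathcal{X}$, and nothing in Steps 2--3 rules this out. The paper treats this as a separate, non-trivial step: after fixing the two-value shape on the support $I$, it relaxes the support size $v\leqslant N$ to a continuous parameter (exactly your trick, but applied to $v$ rather than $j$), writes first-order conditions in $(\delta,v)$, and extracts a sign contradiction unless $v=N$. Without an analogous argument, your plan does not yet exclude zero weights and hence does not reach the form \eqref{eq:optimal_distribution}.
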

\begin{remark}[Intuitions]
The result essentially says that the optimal distribution is a combination of a unit mass and a uniform distribution.
This is a consequence that the optimization program in \eqref{eq:maximize_MGF} basically exhibits two different behaviors: 
concavity for small probability weights and convexity for larger probability weights. While this charecterization is simple the proof is not, as  the standard convex/concave programming framework cannot be applied.
\end{remark}

\begin{remark}[Techniques]
To handle constrained optimization like \eqref{eq:maximize_MGF},
the standard approach is to skip the constraint,
adding instead a corresponding penalty term to the objective (forming the so called Lagrangian). In our case we get
$$L = \sum_{i}p_i\exp\left(tk-t\log\frac{1}{p_i}\right) + \lambda\left(H(p)-k\right)$$ for a weight $\lambda$, to be maximized over probability vectors $p$ (the dual problem). 

By the elegant methods of \emph{majorization theory} we show that
the dual problem is solved by a distribution as in \Cref{eq:optimal_distribution}.
Basically this is because $L$ as a function of $p$ is convex
when restricted to variables $p_i > c$ and concave when restricted to variables $p_i < c$, where $c$ is some constant. To maximize in the convex region the best choice is to have only one $i$ such that $p_i > c$.
To maximize in the concave region the optimal choice is
$p_i = p_j < c$ when $p_i,p_j < c$ (see \Cref{fig:optimization} below).

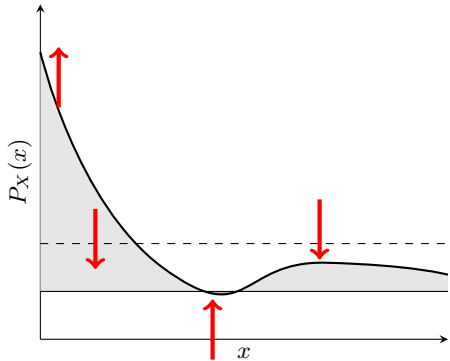
\begin{figure}[h!]

\begin{tikzpicture}
\begin{axis}[
  axis lines = left,
  axis line style= {-stealth},
  ticks = none,
  clip=false,
  xmin = 0,
  xmax = 5,
  ymin=0,
  ymax = 1.0,
  enlarge y limits = 0.2,
  xticklabels=\empty,
  yticklabels=\empty,
  xlabel = $x$,
  ylabel = $P_X(x)$,
  legend pos=north east
]
\addplot+[black, thick, name path =A, smooth, tension=1.1, samples = 50, mark = none] coordinates {(0,1) (1.5,0.1) (3.5,0.12) (5,0.07)} 
node[pos=0.05, above, yshift = -0.55cm] (upa) {}
node[pos=0.15, below, yshift = -0.55cm] (downa) {}
node[pos=0.45, below, yshift = -0.2cm] (upb) {}
node[pos=0.7, above, yshift = -0.2cm] (downb) {};
\addplot[black, dashed, mark = none, domain=0:5] {0.2};
\addplot[name path = C, domain=0:5] {0.0};
\node[above = 0.8cm of upa] (upaa) {};
\draw[->, red, ultra thick] (upa) -- (upaa);
\node[below = 0.8cm of downa] (downaa) {};
\draw[->, red, ultra thick] (downa) -- (downaa);
\node[below = 0.8cm of upb] (upbb) {};
\draw[->, red, ultra thick] (upbb) -- (upb);
\node[above = 0.8cm of downb] (downbb) {};
\draw[->, red, ultra thick] (downbb) -- (downb);
 \addplot fill between[ 
    of = C and A, 
    soft clip={domain=0:5},
    split, 
    every even segment/.style = {gray!20},
    every odd segment/.style  = {gray!60}
  ];
\end{axis}
\end{tikzpicture}
\caption{Pull-away in the convex region, squeeze in the concave region.}
\label{fig:optimization}
\end{figure}

Unfortunatelly, our $L$ fails to satisfy desired 
convexity/concavity properties and the solution to the dual problem is not guaranteed to be optimal for the original problem. To rule out this possibility, called the \emph{duality gap}, we use second order conditions.
These conditions are basically a variational analysis of how $L$ changes when a stationary point $p$ moves a little bit in a way consistent with the constraints (to make this precise, the constraints are linearized and
$L$ is approximated up to second order terms). In our case we conclude that $p$ must be anyway as in \eqref{eq:maximize_MGF}, which completes the proof.

The argument is technically more complicated than sketched here, as 
we need to make sure that the optimal $p$ has no zero weights (by restricting to positive weights we avoid singularities in anaylysis).
We illustrate our proof in \Cref{fig:proof_map} below. The details are given in \Cref{sec:maximize_MGF_constrained}.
\end{remark}

\begin{figure}[t!]
\resizebox{0.99\textwidth}{!}{
\begin{tikzpicture}
\node[draw, rectangle, align=center] (a1) at (0,0) {reduce to maximization of Renyi entropy of order $1+t$ \\
(given Shannon entropy constraints)};
\node[draw, rectangle, align=center, below = 1cm of a1] (b1) {consider the Lagrangian $L$\\
(makes the problem unconstrained)};
\node[draw, rectangle, align=center, below = 1cm of b1] (c1) {find stationary points of $L$ \\
\textbf{result}: weights of $P_X$ take 2 non-zero values \\
\textbf{tool}: first-order optimality conditions
};
\node[below = 0.75cm of c1] (d3) {second order conditions};
\node[draw, rectangle, align=center, below left = 0.5cm and 1cm of d3
] (d1) { $L$ has negative curvature: \\
\textbf{result}: $P_X$ combines a point and a flat dist. \\
\textbf{tool}: majorization theory (Schur convexity)
};
\node[draw, rectangle, align=center, below right = 0.5cm and 1cm of d3] (d2)  {
other curvatures: \\
\textbf{result}: $P_X$ combines a point and a flat dist. \\
\textbf{tool}: second-order optimality conditions
};
\node[draw, rectangle, align=center, below = 3.5cm of d3] (e1) {
\textbf{summary}: $P_X$ combines a point and a flat distribution \\
(may still have zero weights)};
\node[draw, rectangle, align=center, below = 1cm of e1] (f1) {
\textbf{result}: zero-weights are suboptimal \\
\textbf{tool}: implicit function theorem 
};
\draw[->] (a1)  -- (b1) node[midway, right, align=center]{penalty terms};
\draw[->] (b1)  -- (c1) node[midway, right, align=center]{first derivative test};
\draw[->] (e1)  -- (f1) node[midway, right, align=center]{
parameterize by the support
};
\draw[->] (d1) -| (e1) {};
\draw[->] (d2) -| (e1) node [below right = 0.4cm and 3cm of d1]{case analysis} {}; 
\draw[-] (c1) -- (d3);
\draw[->] (d3) -| (d1) {};
\draw[->] (d3) -| (d2) {};
\end{tikzpicture}
}\caption{The overview of the proof of \Cref{thm:maximize_solution}}
\label{fig:proof_map}
\end{figure}
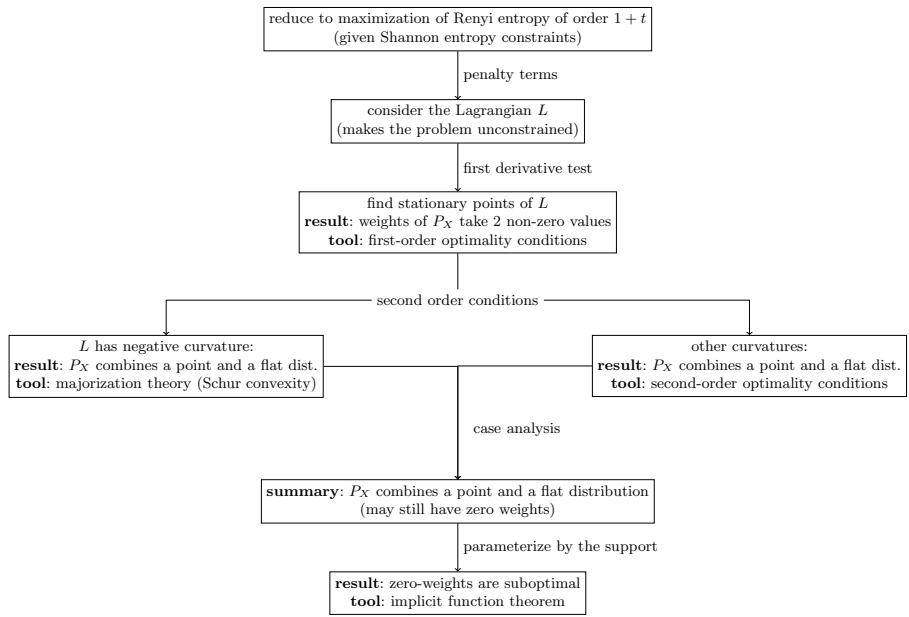

\subsubsection{Optimal sub-exponential tails under entropy constaints}

From \Cref{eq:optimal_distribution} we derive best possible bounds
on the surprise exponential moments. 
There are two technical difficulties to handle.
The first issue is that the distribution in \eqref{eq:optimal_distribution} is given in terms of the bias $\gamma = \theta-\frac{1}{|\mathcal{X}|}$, whereas we need to parameterize it in terms of the entropy amount $k$. Resolving the equation
$H(X) = k$ involves inverting non-elementary equations of the form $\theta\log\theta = c$, and can be done with the use of the Lambert-W function. The second issue is that plugging 
\eqref{eq:optimal_distribution} into \eqref{eq:maximize_MGF} does not lead to a clean formula, and requires some calculus to get clear bounds.

\begin{corollary}\label{cor:sub_exp_tails_improved}
For independent sequence $X_1,X_2,\ldots,X_n$ of random variables over 
$\mathcal{X}$ with entropy defficiency $\Delta_i = H_0(X_i) - H(X_i)$ at most $\Delta$, we have 
\eqref{eq:convergence_2} with the following parameters (below $N=\log|\mathcal{X}|$)
\begin{enumerate}[(a)]
\item for $\Delta = \omega(N^{-1})$ the tail is
\begin{align*}
\epsilon =\left\{
\begin{array}{rl}
 \exp\left(-\frac{t^2}{2 n \Delta\log(N\Delta) }\right), & \quad t < n\Delta \\
 \exp\left(-\frac{t}{2 \log(N\Delta)}\right), & \quad t > n\Delta
\end{array}\right.
\end{align*}
\item when $\Delta = O(N^{-1})$ the tail is
\begin{align*}
\epsilon = \left\{
\begin{array}{rl}
 \exp\left(-\frac{t^2}{2 n \Delta }\right),  & \quad t < n\Delta \\
 \exp\left(-\frac{t}{2}\right), & \quad t > n\Delta
\end{array}
\right.
\end{align*}
\end{enumerate}
\end{corollary}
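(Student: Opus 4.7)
The plan is to combine \Cref{thm:maximize_solution} with a standard Chernoff argument; the main work is translating the extremal form \eqref{eq:optimal_distribution} into clean tail bounds in the two regimes of the corollary (recall $N=\log|\mathcal{X}|$). First, by \Cref{thm:maximize_solution} applied to each $X_i$, the exponential moment in \eqref{eq:maximize_MGF}, subject to $\Delta_i\leq\Delta$, is maximized by the two-level distribution \eqref{eq:optimal_distribution} with full support and saturated deficiency $\Delta_i=\Delta$; this controls both one-sided tails of $\sum_i(\log(1/P_{X_i}(x_i))-H(X_i))$ via Markov's inequality with Chernoff parameters in $[-1,\infty)$. Hence the problem reduces to analyzing a single scalar function
\begin{align*}
M(t)=\theta^{1+t}+(1-\theta)^{1+t}(|\mathcal{X}|-1)^{-t},
\end{align*}
obtained by plugging \eqref{eq:optimal_distribution} into \eqref{eq:maximize_MGF}.

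Second, I would invert the entropy equation to express $\theta$ as a function of $\Delta$. Since $\Delta=\log|\mathcal{X}|-H(X)$ is a smooth function of $\theta$ with minimum at $\theta=1/|\mathcal{X}|$, a local Taylor expansion there gives $\Delta\asymp|\mathcal{X}|(\theta-1/|\mathcal{X}|)^{2}$, whereas for $\theta$ bounded away from $1/|\mathcal{X}|$ the expression is dominated by $\theta\log\theta$, invertible through the Lambert-W function. Substituting the resulting $\theta(\Delta)$ into $M(t)$ and Taylor-expanding $\psi(t):=tH(X)+\log M(t)$ around $t=0$ to second order, with cubic remainder controlled in the appropriate interval of $t$, yields a sub-exponential bound $\psi(t)\leq\tfrac{1}{2}\sigma^{2}t^{2}+c\,|t|^{3}$, where careful bookkeeping gives $\sigma^{2}=O(\Delta\log(N\Delta))$ and $c=O(\log(N\Delta))$ when $\Delta>1/N$ (case (a)), and $\sigma^{2}=O(\Delta)$, $c=O(1)$ when $\Delta\leq 1/N$ (case (b), where $\theta$ is so close to $1/|\mathcal{X}|$ that the logarithmic factor disappears); beyond the sub-Gaussian window, $\psi(t)$ grows only linearly at rate $c$.

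Finally, by independence the cumulant generating function of the centered sum $S_{n}=\sum_{i}(\log(1/P_{X_i}(x_i))-H(X_i))$ is at most $n\psi(t)$ for every Chernoff parameter, so optimizing $t\mapsto n\psi(t)-tu$ in the standard way produces a Gaussian tail $\exp(-u^{2}/(2n\sigma^{2}))$ for deviations $u<n\Delta$ crossing over to the exponential tail $\exp(-u/(2c))$ for $u>n\Delta$, on each side of the mean; union-bounding the two one-sided tails yields the corollary after absorbing the factor of $2$ into constants. The main obstacle is step two: since the entropy equation has no closed-form inverse, the logarithmic factor $\log(N\Delta)$ must be extracted by delicate Lambert-W-based two-sided estimates that remain tight at both ends of each regime and match cleanly at the crossover $\Delta\asymp 1/N$ between parts (a) and (b).
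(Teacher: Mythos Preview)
Your proposal is correct and follows essentially the same route as the paper: reduce via \Cref{thm:maximize_solution} to the two-level distribution, invert the entropy--bias relation in the two regimes, bound the (cumulant) MGF of the surprise to obtain a sub-exponential characterization, and then apply the standard sub-exponential tail bound for sums of independent variables. The only cosmetic differences are that the paper parameterizes first by the bias $\gamma=\theta-1/|\mathcal{X}|$ (doing the case split on $\gamma N$ and only afterward translating $\gamma\leftrightarrow\Delta$ via a separate lemma), and that instead of a second-order-plus-cubic-remainder expansion of $\psi(t)$ it bounds \emph{all} central moments $M_j$ directly from the explicit two-point formula $M_j=\theta(1-\theta)\bigl((1-\theta)^{j-1}-(-\theta)^{j-1}\bigr)\log^{j}\!\frac{\theta(N-1)}{1-\theta}$, which yields the sub-exponential parameters $(\sigma^2,b)$ a bit more cleanly than controlling the third derivative on an interval.
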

The bounds in \Cref{tab:comparison} follow by setting $t = n\delta$ and by combining the formulas for $t<n\Delta$ and $t>n\Delta$ into one, with the maximum function.
\begin{remark}{Examples}
To illustrate the bounds, consider the minimal number $n$ to have
\eqref{eq:convergence_2}, for fixed accuracy $\delta$ and error probability $\epsilon)$-approximation
\begin{enumerate}[(a)]
\item when $\Delta_i = O(1)$ (constant defficiency) we get $n = O(1) \log|\mathcal{X}|\delta^{-2}\log(1/\epsilon)$, saving a factor of 
$\Omega(\log|\mathcal{X}|)$ comparing to~\cite{HolensteinRenner2011}
\item when $\Delta_i = O\left(\frac{1}{H_0(X_i)}\right)$ then 
$n = O(1) \delta^{-2}\log(1/\epsilon)$, which doesn't depend on the alphabet anymore. 
\end{enumerate}
\end{remark}
The proof appears in \Cref{sec:improved_subexp}. 

\subsection{Organization}

We provide background definitions and auxiliary fact used in 
\Cref{sec:prelim}. The proofs are given in \Cref{sec:proofs}.
We conclude our work in \Cref{sec:conc}

\section{Preliminaries}\label{sec:prelim}

We say that the vector $p=(p_i)_i$ is a probability vector
if its entries are non-negative and add up to 1.
The distribution of a radom variable $X$ is denoted by $P_X(x) = \Pr[X=x]$.
The surprise of $X$ is a random variable
$x\rightarrow \log\frac{1}{P_X(x)}$. The Shannon entropy is the expected surprise $\mathbb{E}_{x\sim X}\log\frac{1}{P_X(x)}  =\sum_{x}P_X(x)\log\frac{1}{P_X(x)}$. The Renyi entropy of order $\alpha$ is defined as $-\frac{1}{\alpha-1}\log\sum_{x}P_X(x)^{\alpha}$.

\subsection{Sub-Gaussian Random Variables}

Below we remind basic facts from the theory of subgaussian and subexponential distributions (we refer to~\cite{Vershynin2016} for a detailed threatment).
\begin{definition}[Sub-gaussian tails]
A real-valued random variable $X$ with mean $\mu$ is sub-gaussian with parameter $\sigma^2$ if for all real $t$ we have
\begin{align*}
\mathbb{E}\exp\left(t (X-\mu)\right) \leqslant \exp\left(\frac{\sigma^2 t^2}{2}\right).
\end{align*}
\end{definition}

\begin{lemma}[Independent sub-gaussian random variables]
For independent $X_i$ each sub-gaussian with parameters $\sigma_i^2$, the sum $X=\sum_{i}X_i$ is sub-gaussian
with parameter $\sum_{i}\sigma_i^2$.
\end{lemma}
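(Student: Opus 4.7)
The plan is to verify the sub-gaussian moment generating function inequality for $X=\sum_i X_i$ directly from its definition, exploiting two facts: the MGF of a sum of independent random variables factorizes, and the product of Gaussian-type MGF bounds again has Gaussian form with variances adding.

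First, I would fix $t\in\mathbb{R}$ and let $\mu_i=\mathbb{E}[X_i]$, so that $\mu=\mathbb{E}[X]=\sum_i\mu_i$ by linearity of expectation. The goal reduces to showing
\[
\mathbb{E}\exp\bigl(t(X-\mu)\bigr)\leqslant \exp\Bigl(\tfrac{t^2}{2}\textstyle\sum_i\sigma_i^2\Bigr).
\]
Rewriting the left-hand side, I have $t(X-\mu)=\sum_i t(X_i-\mu_i)$, so
\[
\mathbb{E}\exp\bigl(t(X-\mu)\bigr)=\mathbb{E}\prod_i \exp\bigl(t(X_i-\mu_i)\bigr).
\]

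Next, since the $X_i$ are independent, the random variables $\exp(t(X_i-\mu_i))$ are also independent, and the expectation of their product equals the product of their expectations. This gives
\[
\mathbb{E}\exp\bigl(t(X-\mu)\bigr)=\prod_i\mathbb{E}\exp\bigl(t(X_i-\mu_i)\bigr).
\]
Finally, I would invoke the sub-gaussian hypothesis on each $X_i$ to bound the $i$-th factor by $\exp(\sigma_i^2 t^2/2)$, multiply the bounds, and use the additivity of the exponent to conclude
\[
\prod_i \exp\Bigl(\tfrac{\sigma_i^2 t^2}{2}\Bigr)=\exp\Bigl(\tfrac{t^2}{2}\textstyle\sum_i\sigma_i^2\Bigr),
\]
matching the required sub-gaussian bound with parameter $\sum_i\sigma_i^2$.

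There is no genuine obstacle here: the only subtlety worth a sentence is the justification for pulling the expectation inside the product, which follows because each $\exp(t(X_i-\mu_i))$ is a measurable function of a single $X_i$ and independence is preserved under measurable maps. All steps hold for arbitrary real $t$, so the resulting inequality holds for every $t\in\mathbb{R}$, which is precisely the defining condition of sub-gaussianity with parameter $\sum_i\sigma_i^2$.
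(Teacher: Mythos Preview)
Your proof is correct and is the standard argument: factorize the moment generating function using independence, bound each factor via the sub-gaussian hypothesis, and collect the exponents. The paper itself does not supply a proof of this lemma---it is stated as a preliminary fact with a reference to the literature---so there is nothing further to compare.
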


\subsection{Sub-Exponential Random Variables}

\begin{definition}[Sub-exponential random variables]
A real-valued random variable $X$ with mean $\mu$ is sub-exponential with parameters $(\sigma^2,b)$ if for all $|t|\leqslant \frac{1}{b}$ we have
\begin{align*}
\mathbb{E}\exp\left(t (X-\mu)\right) \leqslant \exp\left(\frac{\sigma^2 t^2}{2}\right).
\end{align*}
\end{definition}

\begin{lemma}[Sub-exponential tails]
For $X$ as above we have 
\begin{align*}
\Pr[X>\mu  +t]\leqslant 
\left\{ 
\begin{array}{rl}
\exp\left( -\frac{t^2}{2\sigma^2}\right), & 0\leqslant t \leqslant \frac{\sigma^2}{b} \\
\exp\left( -\frac{t}{b}\right), & \frac{\sigma^2}{b} < t
\end{array}
\right.
\end{align*} 
\end{lemma}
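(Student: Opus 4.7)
The plan is the classical Chernoff / exponential Markov argument, with the free parameter $\lambda$ restricted to the admissible range $[0,1/b]$ on which the sub-exponential MGF bound is guaranteed by hypothesis. For any such $\lambda \geq 0$, Markov's inequality applied to the nonnegative random variable $e^{\lambda(X-\mu)}$, together with the hypothesis, gives
\[
\Pr[X > \mu + t] \;\leq\; e^{-\lambda t}\,\mathbb{E}\,e^{\lambda(X-\mu)} \;\leq\; \exp\!\left(\tfrac{\sigma^{2}\lambda^{2}}{2} - \lambda t\right),
\]
so the whole lemma reduces to minimizing the quadratic $\phi(\lambda) = \sigma^{2}\lambda^{2}/2 - \lambda t$ over $\lambda \in [0, 1/b]$.

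The unconstrained minimizer is $\lambda^{\star} = t/\sigma^{2}$, with minimum value $-t^{2}/(2\sigma^{2})$, and I would branch according to whether $\lambda^{\star}$ lies inside the admissible interval, which is precisely the dichotomy appearing in the statement. In the \emph{sub-Gaussian regime} $0 \leq t \leq \sigma^{2}/b$ one has $\lambda^{\star} \leq 1/b$, so the unconstrained optimum is admissible; plugging $\lambda = \lambda^{\star}$ yields directly the Gaussian tail $\exp(-t^{2}/(2\sigma^{2}))$. In the \emph{sub-exponential regime} $t > \sigma^{2}/b$ one has $\lambda^{\star} > 1/b$, hence $\phi$ is strictly decreasing on $[0,1/b]$, and the best admissible choice is the right endpoint $\lambda = 1/b$; this gives $\exp\!\bigl(\sigma^{2}/(2b^{2}) - t/b\bigr)$, and since $\sigma^{2}/b < t$ in this regime implies $\sigma^{2}/(2b^{2}) < t/(2b)$, the bound collapses to a linear tail of the promised form (up to the factor of two in the exponent, which is merely a matter of the normalization of $b$).

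There is no genuine obstacle: the entire argument is a one-line Chernoff bound followed by a quadratic minimization on an interval. The only conceptual point worth flagging is that the knee at $t = \sigma^{2}/b$ is \emph{not} an intrinsic feature of $X$ but is created by the artificial cap $\lambda \leq 1/b$ on the admissible MGF parameters; once the unconstrained optimizer hits the boundary of that interval, the exponent becomes linear rather than quadratic in $t$, giving the crossover from sub-Gaussian to sub-exponential behaviour exactly at the stated threshold.
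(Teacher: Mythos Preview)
Your argument is correct and is exactly the standard Chernoff method for sub-exponential variables. The paper does not actually give a proof of this lemma: it is listed in the preliminaries as a background fact (with a reference to Vershynin's notes), so there is nothing to compare against.

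One remark on the factor of two you flag. Your derivation is tight: with $\lambda=1/b$ one obtains $\exp\!\bigl(\tfrac{\sigma^{2}}{2b^{2}}-\tfrac{t}{b}\bigr)$, and using $\sigma^{2}/b<t$ this is at most $\exp(-t/(2b))$, not $\exp(-t/b)$. The lemma as printed in the paper is therefore off by that factor; the version you prove, $\exp(-t/(2b))$, is the correct one and is in fact what the paper actually uses downstream (see the linear-tail branches in \Cref{cor:sub_exp_tails_improved}, where the exponent is $-t/(2\log(N\Delta))$ with $b=\log(N\Delta)$). So your ``normalization'' caveat is not a weakness of your argument but a typo in the stated lemma.
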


\begin{lemma}[Independent sub-exponential random variables]\label{lemma:sub_exp_sum}
For independent $X_i$ each sub-exponential with parameters $(\sigma_i^2,b_i)$, the sum $X=\sum_{i}X_i$ is sub-exponential
with parameters $(\sum_{i}\sigma_i^2,\max_{i} b_i)$.
\end{lemma}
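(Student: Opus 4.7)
The plan is to compute the moment generating function of the centered sum $X-\mathbb{E}X$ directly, exploiting independence to factor it into per-summand MGFs, and then to identify the range of $t$ on which every factor is controlled by the sub-exponential hypothesis. Write $\mu_i = \mathbb{E}X_i$ and $\mu = \sum_i \mu_i = \mathbb{E}X$, so that the centered sum satisfies $X-\mu = \sum_i (X_i-\mu_i)$.

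First, for any real $t$, independence of the $X_i$ (and hence of the functions $\exp(t(X_i-\mu_i))$) gives
$$\mathbb{E}\exp(t(X-\mu)) \;=\; \mathbb{E}\prod_i \exp(t(X_i-\mu_i)) \;=\; \prod_i \mathbb{E}\exp(t(X_i-\mu_i)).$$
Next I would restrict to those $t$ for which all $i$ simultaneously satisfy the sub-exponential hypothesis, namely $|t|\leqslant 1/b_i$ for every $i$. This common range is exactly $|t|\leqslant 1/\max_i b_i$. On this interval the hypothesis applies term by term, yielding
$$\prod_i \mathbb{E}\exp(t(X_i-\mu_i)) \;\leqslant\; \prod_i \exp\!\left(\tfrac{\sigma_i^2 t^2}{2}\right) \;=\; \exp\!\left(\tfrac{1}{2}\Bigl(\sum_i \sigma_i^2\Bigr) t^2\right),$$
which matches the definition of sub-exponential with variance proxy $\sum_i \sigma_i^2$ and scale parameter $\max_i b_i$, as claimed.

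There is essentially no hard part: the only subtlety worth noting is that the admissible $t$-range contracts to the intersection $|t|\leqslant 1/\max_i b_i$ of the individual ranges, which is why the scale parameter of the sum is $\max_i b_i$ rather than an average or a maximum with a better constant. This contraction is unavoidable, since a single summand with $|t|>1/b_i$ can have unbounded MGF and thus destroy the bound on the product; so the stated parameters are the natural ones delivered by this factor-and-restrict argument.
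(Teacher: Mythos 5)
Your proof is correct and is exactly the standard factor-the-MGF-by-independence argument that the paper relies on (it states this lemma without proof, deferring to the reference on sub-exponential distributions). The one subtlety you flag — that the admissible range of $t$ shrinks to the intersection $|t|\leqslant 1/\max_i b_i$, which is precisely why the scale parameter of the sum is $\max_i b_i$ — is the right thing to emphasize, and nothing is missing.
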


\subsection{Optimization theory}
In this section we very briefly remind some concepts from optimization theory (see for example \cite{Floudas2006} for a reference). Suppose that we want to solve a problem of the form
\begin{align*}
\begin{aligned}
\underset{p \in D}{\mathrm{maximize}} && f(p) \\
\mathrm{s.t.} &&
\begin{array}{l}
h_i = 0,\quad i\in I \\
g_j \geqslant 0,\quad j\in J.
\end{array}
\end{aligned}
\end{align*}
where $D\subset \mathbb{R}^d$ is an open set. Any point $p$ satisfying the constraints is called feasible. We call the maximizer $p=p^{*}$ the optimal point. The inequality constraint $g_j$ is called active at $p$ if $g_j(p) = 0$. All equality constraints are active at any feasible point.

The optimal point can be characterized by the so called KKT conditions, provided that 
certain regularity properties are satisfied.

\begin{definition}[LICQ constraint qualification]
We say that the LICQ constraint qualification holds, if at the optimal point
the gradients of the active constraints are linearly independent.
\end{definition}
If the LICQ condition is satisfied, then the optimal point $p=p^{*}$ is a stationary point to the Lagrangian formulated as
\begin{align*}
L = f + \sum_{i}\lambda_i \cdot  h_i + \sum_{j}\cdot \mu_j g_j
\end{align*}
where $\lambda_i \in \mathbb{R}$ for $i\in I$ and $\mu_j \geqslant 0$ for $j\in J$ are some weights (non-zero only for active constraints). This leads to the so called \emph{first order conditions}
\begin{align*}
\frac{\partial L}{\partial p}(p^{*}) = 0.
\end{align*}
In lack of convexity properties, could be that $p^{*}$ is only stationary
to $L$, but is not optimal for $L$. Still, the following second order conditions are satisfied
\begin{align*}
d^T \cdot \frac{\partial^2 L}{\partial p^2}(p^{*}) \cdot d\geqslant 0
\end{align*}
for all vectors $d$ such that $\frac{\partial g_j}{\partial p}(p^{*})\cdot d = 0$ for active $j\in J$ and $\frac{\partial h_i}{\partial p}(p^{*})\cdot d = 0$ for $i\in I$. These vectors are called "tangent" because they discribe small perturbation of the point that are consisent with the constraints.

\subsection{Majorization theory}

\begin{definition}[Vectors majorization]
For two vectors $u,v\in\mathbb{R}^d$ we say that 
$u$ majorizes $v$, and denote by $u\succ v$,
 if the following inequalities
\begin{align*}
\sum_{i=1}^{j} u'_{i} \geqslant \sum_{i=1}^{j} v'_{i} \quad
\text{ for } j=1,\ldots,d
\end{align*}
where $u'$ and $v'$ are vectors with the same components as
$u$ and $v$ respectively, sorted in the non-decreasing order.
\end{definition}

\begin{definition}[Schur convexity]
We say that $f:\mathbb{R}^d\rightarrow \mathbb{R}$ is Schur-convex (abbreviated to S-convex) whenever
$f(u)\geqslant f(v)$ for any $u,v\in\mathbb{R}^d$ such that
$u$ majorizes $v$.
\end{definition}

\begin{proposition}[S-Convexity Criteria~\cite{marshall11}]\label{lemma:sconvex_criteria}
The following statements are true
\begin{itemize}
\item Every symmetric and convex function is S-convex.
\item If $f:\mathbb{R}^d\rightarrow \mathbb{R}$ is increasing in each coordinate
and $h_i$ for $i=1,\ldots,d$ are S-convex then the composition
$f(h_1,\ldots,h_d)$ is S-convex.
\end{itemize}
\end{proposition}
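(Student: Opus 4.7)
The plan is to prove the two items in turn: the first reduces majorization to a statement about doubly stochastic matrices, and the second follows by direct coordinate monotonicity.

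For the first claim, I would invoke the Hardy-Littlewood-P\'{o}lya characterization: $u \succ v$ holds if and only if $v = D u$ for some doubly stochastic matrix $D$. By the Birkhoff-von Neumann theorem, such a $D$ decomposes as a convex combination of permutation matrices, $D = \sum_{k} \lambda_k P_k$ with $\lambda_k \geqslant 0$ and $\sum_k \lambda_k = 1$, so that $v = \sum_k \lambda_k (P_k u)$ is a convex combination of permutations of $u$. Applying convexity of $f$ gives $f(v) \leqslant \sum_k \lambda_k f(P_k u)$, and symmetry simplifies each $f(P_k u)$ to $f(u)$, yielding $f(v) \leqslant f(u)$, which is S-convexity.

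For the second claim, the argument is a short monotonicity chase. If $u \succ v$, then S-convexity of each $h_i$ gives $h_i(u) \geqslant h_i(v)$ for every $i$, so the vector $(h_1(u),\ldots,h_d(u))$ dominates $(h_1(v),\ldots,h_d(v))$ coordinate-wise. Since $f$ is increasing in each coordinate, $f(h_1(u),\ldots,h_d(u)) \geqslant f(h_1(v),\ldots,h_d(v))$, which is exactly the S-convexity of the composition $f(h_1,\ldots,h_d)$.

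The main obstacle lies in the first claim, specifically the appeal to the two structural theorems (Hardy-Littlewood-P\'{o}lya and Birkhoff-von Neumann). A self-contained alternative relies on \emph{T-transforms} (Robin-Hood averages on a pair of coordinates): $u \succ v$ is equivalent to $v$ being obtained from $u$ via finitely many such two-coordinate transfers, each replacing $(u_i,u_j)$ by $(\lambda u_i + (1-\lambda)u_j, (1-\lambda)u_i + \lambda u_j)$ for some $\lambda \in [0,1]$. Symmetry of $f$ lets us analyze a single pair in isolation, and convexity of $f$ restricted to that two-dimensional affine slice implies the value cannot increase under a transfer; iterating completes the proof without invoking doubly stochastic matrices.
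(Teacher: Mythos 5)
Your proof is correct. The paper does not actually prove this proposition---it is imported as a black-box citation from Marshall--Olkin \cite{marshall11}---and both of your arguments are the standard ones from that reference: the first item via Hardy--Littlewood--P\'{o}lya plus Birkhoff--von Neumann (or, equivalently, your self-contained T-transform variant, where each transfer produces $\lambda u + (1-\lambda)\tau u$ for a transposition $\tau$, so symmetry and convexity give $f(v)\leqslant f(u)$ directly), and the second item via the coordinatewise monotonicity chase. The only caveat worth noting is that the paper's stated definition of $u\succ v$ (partial sums in \emph{non-decreasing} order, with no equality of total sums) is garbled relative to the standard one your proof implicitly uses; your argument is correct for the intended, standard notion of majorization, which is also the one the paper actually relies on later.
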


\section{Main Result}\label{sec:proofs}

\subsubsection{Optimizing the surprise MGF under entropy constraints}
\label{sec:maximize_MGF_constrained}


\begin{proof}[Proof of \Cref{thm:maximize_solution}]
We consider \eqref{eq:maximize_MGF} for some alphabet $\mathcal{X}$ of fixed size $N$. We will prove that for some constants $\alpha>\beta>0$, the distribution $P_X$ optimal to \eqref{eq:maximize_MGF}
satisfies
\begin{align}\label{eq:maximize_shape_0}
\forall x: \ P_X(x) \in \{\alpha,\beta\},  \ \#\{x: P_X(x)=\alpha \} =1
\end{align}
Note that \eqref{eq:maximize_MGF} is equivalent to
\begin{align}\label{eq:maximize_0}
\begin{aligned}
\underset{p}{\mathrm{maximize}} && \sum_{i}p_i\mathrm{e}^{t H(p)-t\log\frac{1}{p_i}} \\
\mathrm{s.t.} && 
\left.
\begin{array}{r}
H(p) = k
\end{array}
\right.
\end{aligned}
\end{align}
over \emph{probability vectors} $p\in\mathbb{R}^N$. Note also that
we have to assume that $t \geqslant -1$, as for $t < -1$
the value of \eqref{eq:maximize_0} is unbounded (because the objective equals $\mathrm{e}^{H(p)}\sum_i p_i^{1+t}$ which is arbitrarily big whenever one of $p_i$ is close to zero). For $t=-1$ we see that the objective
in \eqref{eq:maximize_0} is constant and our statement is trivialy true. 
Thus we can assume that $t>-1$.
\begin{claim}[The optimal solution is not flat]
If $k < \log N$ then the solution $p=p^{*}$ to \eqref{eq:maximize_2} has at least two different non-zero entries.
\end{claim}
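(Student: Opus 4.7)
The strategy is proof by contradiction: assume $p^{*}$ is uniform on its support, show this forces the objective value to equal $1$, then exhibit a feasible distribution whose objective strictly exceeds $1$.

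The first step is to rewrite the objective in Renyi form. Using the feasibility constraint $H(p) = k$, the objective becomes
\begin{align*}
\sum_{i} p_i \exp\!\bigl(tH(p) - t\log(1/p_i)\bigr) \;=\; e^{tk}\sum_{i} p_i^{1+t} \;=\; \exp\!\bigl(-t(H_{1+t}(p)-k)\bigr),
\end{align*}
so maximizing the objective under the Shannon constraint is equivalent to minimizing $H_{1+t}$ when $t>0$ and maximizing $H_{1+t}$ when $-1<t<0$. Now suppose $p^{*}$ is uniform on a support of size $M$. Feasibility then forces $k = \log M$ with $M$ a positive integer, and $k<\log N$ gives $M<N$. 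The value of the objective at such a $p^{*}$ is exactly $e^{tk}\cdot M\cdot M^{-(1+t)} = 1$.

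To derive a contradiction I would exhibit a feasible non-uniform-on-support distribution with objective strictly above $1$. Take the two-valued family $q_\theta(x_0)=\theta$, $q_\theta(x) = (1-\theta)/(N-1)$ for $x\neq x_0$. Its Shannon entropy is a continuous function of $\theta$ equal to $\log N$ at $\theta=1/N$ and tending to $0$ as $\theta\to 1$, so by the intermediate value theorem there is $\theta^{\star}\in(1/N,1)$ with $H(q_{\theta^{\star}})=k$; for this $\theta^{\star}$ the distribution $q_{\theta^{\star}}$ is manifestly not uniform on its support. The strict version of the Renyi--Shannon inequality (strict whenever $p$ is not uniform on its support, a consequence of strict concavity of $\log$ via Jensen) then gives $H_{1+t}(q_{\theta^{\star}})<k$ for $t>0$ and $H_{1+t}(q_{\theta^{\star}})>k$ for $-1<t<0$; in either case $-t(H_{1+t}(q_{\theta^{\star}})-k)>0$, so the objective at $q_{\theta^{\star}}$ strictly exceeds $1$. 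This contradicts optimality of $p^{*}$, so $p^{*}$ cannot be uniform on its support, i.e.\ it must possess at least two distinct non-zero entries.

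I do not anticipate a substantial obstacle; the argument is short because it merely uses continuity of entropy and one of the two well-known Renyi--Shannon inequalities. The only subtleties are housekeeping: (i) the case $t=0$ is degenerate since the objective is identically $1$ and every feasible point is optimal, in which case one simply selects $q_{\theta^\star}$ as \emph{an} optimal solution with the desired property; (ii) the integrality restriction $k=\log M$ makes the uniform-on-support assumption infeasible for most values of $k$, so the contradiction argument is only needed for the exceptional values; and (iii) the tacit assumption $k>0$, without which only a point mass is feasible and the statement degenerates.
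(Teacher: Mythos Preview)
Your argument is correct and follows essentially the same route as the paper: assume the optimum is flat on its support, observe the objective then equals $1$, construct the same two-valued family $q_\theta$ and use the intermediate value theorem to hit entropy $k$, then invoke strict Jensen (which you phrase via the strict Renyi--Shannon comparison, the paper applies Jensen to $u\mapsto e^u$ directly) to get a feasible point with objective strictly above $1$. Your housekeeping remarks on $t=0$ and on the integrality of $\log M$ are even slightly more careful than the paper's version.
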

\begin{proof}[Proof of Claim]
If the optimal solution is a flat distribution, then $H(p^{*}) = k$ implies
$p^{*}_i = 2^{-k}$ for all $i$ such that $p^{*}_i > 0$\footnote{In particular $2^k$ must be integer, but we don't use this observation}. This means that the value of \eqref{eq:maximize_0}
is $1$. However, by the Jensen inequality applied to a strictly convex function $u\rightarrow \mathrm{e}^{u}$, for any probability vector $q$ 
such that $q_i \not q_j \not=0$ for some $i,j$ we have 
\begin{align*}
\sum_{i}q_i\mathrm{e}^{t H(q)-t\log\frac{1}{q_i}} >
\mathrm{e}^{t\sum_i q_i \left(H(q) - \log\frac{1}{q_i} \right) } = 
\mathrm{e}^{t H(q) - tH(q)} = 1.
\end{align*}
In other words, the objective is strictly bigger than $1$ for any non-flat distribution $q$. We conclude that all feasible distributions must be flat.

Now, if $k < \log N$ we consider the probability vector $p$ given by
$p_1 = \delta$, $p_i = \frac{1-\delta}{N-1}$ for $i\not = 1$, and solve
the equation $H(p) = k$. This is equivalent to
\begin{align*}
\delta\log\frac{1}{\delta} + (1-\delta)\log\frac{N-1}{1-\delta} = k
\end{align*}
and because the left-hand side equals $\log N$ when $\delta= N^{-1}$ and $0$ when $\delta = 1$, there is a solution
$\frac{1}{N} < \delta < 1$ (by continuity and the intermediate value theorem). This solution satisfies $0<p_1 < p_2$ hence is not flat, a contraddiction. \qed
\end{proof}
Since for all feasible points $H(p)$ is constant, the objective in
\eqref{eq:maximize_0} can be simplified and the solution is the same as for the program
\begin{align}\label{eq:maximize_2}
\begin{aligned}
\underset{p}{\mathrm{maximize}} &&  \sum_{i} p_i^{1+t} \\
\mathrm{s.t.} && 
\left.
\begin{array}{r}
H(p) = k
\end{array}
\right.
\end{aligned}
\end{align}
over probability vectors $p \in \mathbb{R}^N$. 
In the first step we prove a somewhat weaker result, namely that the optimal solution $p=p^{*}$ for some $0<\alpha< \beta$ satisfies
\begin{align}\label{eq:maximize_shape}
\forall i:\ p^{*}_i \in \{0,\alpha,\beta\},  \ \#\{i: p^{*}_i=\alpha \} =1
\end{align}
Note that we can skipp the zero entries of $p$,
as this doesn't change other constraints $H(p) = k$ and $\sum_{i}p_i = 1$
neither the objective function. More precisely, 
if $p^{*}$ is optimal for \eqref{eq:maximize_2} and $I = \{i: p^{*}_i > 0 \}$ then $ p = p^{*}_{i\in I}$ is a local maximizer of 
\begin{align}\label{eq:maximize_3}
\begin{aligned}
\underset{p}{\mathrm{maximize}} &&  \sum_{i\in I} p_i^{1+t} \\
\mathrm{s.t.} && 
\left\{
\begin{array}{r}
H(p) = k \\
\sum_{i\in I} p_i = 1 \\
\forall i\in I:\quad p_i > 0
\end{array}
\right.
\end{aligned}
\end{align}
over vectors $p\in\mathbb{R}^{|I|}$.
\begin{claim}[Regularity conditions hold]
If $p=p^{*}$ is optimal to \eqref{eq:maximize_3}, then
the LICQ condition is satisfied at $p^{*}$.
\end{claim}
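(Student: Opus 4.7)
The plan is to unpack LICQ at $p^*$ and show it reduces to a triviality once we use the previous claim (the optimal solution is not flat). Since the strict inequality constraints $p_i>0$ are never active at any feasible point of \eqref{eq:maximize_3}, the only active constraints at $p^*$ are the two equalities
\begin{align*}
h_1(p) \;=\; -\sum_{i\in I} p_i \log p_i - k, \qquad h_2(p) \;=\; \sum_{i\in I} p_i - 1.
\end{align*}
Hence LICQ reduces to verifying that the gradients $\nabla h_1(p^*)$ and $\nabla h_2(p^*)$ are linearly independent in $\mathbb{R}^{|I|}$.

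Next I would compute these gradients explicitly. A direct calculation gives
\begin{align*}
\nabla h_1(p^*) \;=\; \bigl(-\log p^*_i - 1\bigr)_{i\in I}, \qquad \nabla h_2(p^*) \;=\; (1,1,\ldots,1).
\end{align*}
Suppose now, for contradiction, that they are linearly dependent, i.e. there exist scalars $a,b$ not both zero with $a(-\log p^*_i - 1) + b = 0$ for every $i\in I$. If $a=0$ then $b=0$, a contradiction; so $a\neq 0$ and we conclude $\log p^*_i$ is constant in $i\in I$, i.e.\ $p^*$ is uniform on its support.

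Finally I would invoke the previous claim, which shows that in the regime $k<\log N$ the optimizer $p^*$ has at least two distinct non-zero entries, hence cannot be a flat distribution on its support. This contradiction establishes LICQ at $p^*$. The only subtlety worth double-checking is that we correctly identify the domain $D = \{p\in\mathbb{R}^{|I|}: p_i>0\}$ as open, so that the strict positivity constraints are not part of the active set; otherwise the argument is routine and the main work was already done by the previous claim. \qed
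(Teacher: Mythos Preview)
Your proposal is correct and follows essentially the same argument as the paper: identify the two active equality constraints, compute their gradients, and observe that linear dependence would force all $p^{*}_i$ to be equal, contradicting the previous claim that the optimal $p^{*}$ is not flat. Your only addition is the explicit remark that the strict constraints $p_i>0$ define the open domain and are therefore never active, which the paper leaves implicit.
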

\begin{proof}[Proof of Claim]
The active constraints are $\sum_{i}p_i = 1$ and  
$H(p) = k$. We have $\frac{\partial \sum_{i}p_i}{\partial p_i} = 1$,
and $\frac{\partial H(p)}{\partial p_i} = \log\frac{1}{p_i}-1$. If the gradients are linearly dependent at $p^{*}$ then $\lambda_1 + \lambda_2\left(\log\frac{1}{p^{*}_i}-1 \right)=0$ for all $i \in I$. If $\lambda_2=0$ then $\lambda_1 = 0$ and we are done. If $\lambda_2\not=0$ then for some constants $c$ and all $i$ we have $p^{*}_i = c$, a contraddiction to the previous claim that $p^{*}$ cannot be flat.
\qed
\end{proof}
The lagrangian associated with \eqref{eq:maximize_3} is
\begin{align}\label{eq:lagrange}
L(p,\lambda_1,\lambda_2) = \sum_{i\in I} p_i^{1+t} - \lambda_1 \left(\sum_{i\in I}p_i\log\frac{1}{p_i}-k\right) - \lambda_2\left(\sum_{i\in I}p_i - 1 \right).
\end{align}
By the first order conditions applied to \eqref{eq:lagrange}
we (partially) characterize the optimal solution.

\begin{claim}[The solution combines two flat distributions]
If $p=p^{*}$ 
solves \eqref{eq:maximize_3}, then 
its entries take only two values: 
$p^{*}_i \in \{\alpha,\beta\}$ for some $0 <\alpha<\beta$ and all $i\in I$.
\end{claim}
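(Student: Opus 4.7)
The plan is to apply the first-order KKT conditions from the Lagrangian \eqref{eq:lagrange} (which are legitimate by the previous LICQ claim) and to read off from them that every entry of $p^{*}$ satisfies a single scalar equation $f(p) = c$ whose solution set has cardinality at most two. First I would exclude the trivial case $t = 0$ (where the objective $\sum_i p_i^{1+t} = 1$ is constant, so the claim is vacuous) and assume $t > -1$, $t \neq 0$. Differentiating \eqref{eq:lagrange} with respect to $p_i$ yields, for every $i \in I$,
\begin{align*}
(1+t)\, p_i^{\,t} + \lambda_1 \log p_i + (\lambda_1 - \lambda_2) = 0,
\end{align*}
so each $p_i^{*}$ is a root of $f(p) := (1+t) p^{t} + \lambda_1 \log p$ at the common level $c := \lambda_2 - \lambda_1$.

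Next I would study the shape of $f$ on $(0,\infty)$. A direct computation gives $f'(p) = t(1+t) p^{t-1} + \lambda_1/p$, which vanishes only where $\lambda_1 = -t(1+t) p^{t}$, i.e.\ at most at one positive $p$. A quick second-derivative check at such a stationary point yields $f''(p) = t^2(1+t) p^{t-2} > 0$, so whenever $f'$ has a zero it is a strict global minimum. Consequently $f$ is either strictly monotonic or has a single interior minimum, which means the equation $f(p) = c$ admits \emph{at most two} positive solutions.

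From this I would conclude: the strictly monotonic case forces all $p_i^{*}$ to coincide, which is ruled out by the previous claim that $p^{*}$ is not flat; hence $f$ has a minimum and the level set $\{f = c\}$ in $(0,1)$ contains exactly two points, call them $\alpha < \beta$. Every $p_i^{*}$ with $i \in I$ lies in $\{\alpha,\beta\}$, establishing \eqref{eq:maximize_shape} restricted to the support. Since all entries of $p^{*}$ indexed by $I$ are strictly positive by construction of \eqref{eq:maximize_3}, we obtain $0 < \alpha < \beta$ as required.

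The main obstacle is the case analysis in the sign of $\lambda_1$ relative to the sign of $t$: one must verify that whenever $f$ fails to be monotonic, the unique critical point is a \emph{minimum} (not a maximum or inflection), which is what limits $|f^{-1}(c)| \le 2$. Fortunately the identity $f''(p) = t^2(1+t) p^{t-2}$ at the critical point holds uniformly for $t \in (-1,0) \cup (0,\infty)$, so the two sign regimes $t \in (-1,0)$ (requiring $\lambda_1 > 0$ for a critical point) and $t > 0$ (requiring $\lambda_1 < 0$) can be handled by the same computation and no separate argument is needed.
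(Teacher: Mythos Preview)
Your proof is correct and follows essentially the same route as the paper: apply the first-order KKT conditions (justified by the preceding LICQ claim) to obtain that every $p_i^{*}$ satisfies a single scalar equation $f(p)=c$ with $f(p)=(1+t)p^{t}+\lambda_1\log p$, then observe that $f'(p)=t(1+t)p^{t-1}+\lambda_1/p$ has at most one zero on $(0,1)$, so the level set $\{f=c\}$ has at most two elements; non-flatness of $p^{*}$ then forces exactly two. Your second-derivative computation $f''=t^2(1+t)p^{t-2}$ at the critical point and your explicit treatment of $t=0$ are additional (correct) details the paper omits, but the argument is otherwise the same.
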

\begin{proof}[Proof of Claim]
By the first order conditions (justified because of the regularity proven in the last claim) for some $\lambda_1,\lambda_2$ we have
\begin{align}\label{eq:foc}
\forall i\in I:\quad  0 = \frac{\partial L}{\partial p_i}(p^{*})  = (1+t)(p^{*})_i^{t} - \lambda_1\left(\log\frac{1}{p^{*}_i}-1 \right) - \lambda_2
\end{align}
Note that \Cref{eq:foc} is equivalent to $g(p^{*}_i) = c$ for all $i\in I$ and some constant $c$, where $g$ is the function defined by $g(u) = (1+t)u^t - \lambda_1\log\frac{1}{u}$.
The derivative equals $\frac{\partial g}{\partial u} = t(1+t)u^{t-1}+\frac{\lambda_1}{u} $, and changes its sign at most once over $u\in (0,1)$. Thus any equation $g(u) = c$ has at most two solutions in $u\in (0,1)$. In particular, since $g(p^{*}_i) = c$ for all $i$,
there are two values $\alpha,\beta$ possible for $p^{*}_i$ where $i\in I$.
Note that $0\not=\alpha\not=\beta$ as we proved that $p^{*}$ is not  uniform. We can assume $\alpha<\beta$.
\qed
\end{proof}

The last claim is a step towards the characterization \Cref{thm:maximize_solution}, but we need to establish that the weight
$\alpha$ is used only once and that no zero-weights occur. 
The numbers $\alpha,\beta$ from the last claim may depend only $t$ and $k$.
Below we show that the dependency is basically limited to $k$.
\begin{claim}[Optimal probabilities don't depend on $t$]
For any fixed $k$, there exist finitely many choices for the optimal solution (the choices don't depend on $t$).
\end{claim}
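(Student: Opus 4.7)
The plan is to observe that, by the previous claim, any optimal $p^{*}$ to \eqref{eq:maximize_3} is determined (up to permutation) by the following discrete and continuous data: the support size $|I|$, the multiplicity $a=\#\{i\in I: p^{*}_i=\alpha\}$ and $b=\#\{i\in I: p^{*}_i=\beta\}$, and the two values $0<\alpha<\beta$. Since $|I|\leqslant N$ and $a+b=|I|$, there are only finitely many choices for the triple $(|I|,a,b)$. I would then show that for each fixed $(|I|,a,b)$, the pair $(\alpha,\beta)$ is uniquely determined by the active constraints
\begin{align*}
a\alpha+b\beta=1,\qquad a\,\alpha\log\tfrac{1}{\alpha}+b\,\beta\log\tfrac{1}{\beta}=k.
\end{align*}
Since these equations do not involve $t$, the whole family of candidates is $t$-independent, which is exactly what we need.

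For the uniqueness step I would introduce the substitution $x=a\alpha$, so that $1-x=b\beta$ and the system collapses to the single transcendental equation
\begin{align*}
\varphi(x) \;:=\; -x\log x -(1-x)\log(1-x) + x\log a + (1-x)\log b \;=\; k,
\end{align*}
on $x\in(0,1)$. A direct computation gives $\varphi''(x)=-\tfrac{1}{x(1-x)}<0$, so $\varphi$ is strictly concave and in particular any horizontal line cuts its graph in at most two points. The unique critical point solves $\log\tfrac{a(1-x)}{bx}=0$, i.e.\ $x=a/(a+b)$, which yields $\alpha=\beta=1/(a+b)$ and is thus excluded by our standing assumption $\alpha<\beta$. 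Consequently the equation $\varphi(x)=k$ has at most two roots, and the normalization $\alpha<\beta$ selects at most one, giving at most one valid pair $(\alpha,\beta)$ per triple $(|I|,a,b)$.

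Combining the bounded finite number of choices of $(|I|,a,b)$ with the at-most-one-solution statement per triple, the optimal $p^{*}$ belongs to a finite, $t$-independent set of candidates (parameterized only by $k$ and $N$). I expect no real obstacle here beyond bookkeeping, since the previous claim has already done the hard work of forcing the two-value structure; the concavity of binary entropy takes care of the rest. The only subtle point is to rule out the critical case $x=a/(a+b)$, which corresponds to the flat distribution on $I$ and was already excluded in the first claim of the proof.
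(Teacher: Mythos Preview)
Your proposal is correct and arrives at the same conclusion as the paper, but via a genuinely different route. The paper fixes $(N_\alpha,N_\beta)$ (your $(a,b)$), writes the two constraints as a map $g:(\alpha,\beta)\mapsto\bigl(N_\alpha\alpha+N_\beta\beta,\ N_\alpha\alpha\log\tfrac{1}{\alpha}+N_\beta\beta\log\tfrac{1}{\beta}\bigr)$, computes the Jacobian $\det\partial g/\partial(\alpha,\beta)=N_\alpha N_\beta(\log\tfrac{1}{\beta}-\log\tfrac{1}{\alpha})\neq 0$, and then invokes the implicit (inverse) function theorem to conclude uniqueness of $(\alpha,\beta)$ for each pair $(N_\alpha,N_\beta)$. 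You instead eliminate one variable by the affine substitution $x=a\alpha$, reduce both constraints to the single equation $\varphi(x)=k$, and use strict concavity of $\varphi$ (which is binary entropy plus an affine term) to bound the number of roots by two; the normalisation $\alpha<\beta$ then singles out the root on the side $x<a/(a+b)$ of the unique critical point.

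Your argument is more elementary and, in fact, a bit tighter: the inverse function theorem as invoked in the paper gives only local injectivity from a nonvanishing Jacobian, whereas your concavity argument yields global uniqueness on the half $\{\alpha<\beta\}$ directly. The paper's approach, on the other hand, generalises more transparently to systems that do not reduce to a single scalar equation. Both methods rely on the same input (the two constraints do not involve $t$) and the same finite enumeration over $(|I|,a,b)$, so the overall structure matches; only the uniqueness step differs.
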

\begin{proof}[Proof of Claim]
By last claim, we have 
\begin{align}\label{eq:mixture_of_two}
\left.
\begin{array}{r}
N_{\alpha}\alpha + N_{\beta}\beta = 1 
\\
N_{\alpha}\alpha\log\frac{1}{\alpha}+N_{\beta}\beta\log\frac{1}{\beta} = k
\end{array}
\right.
\end{align}
for some natural numbers $N_{\alpha} = \#\{i:\ p^{*}_i=\alpha\},
N_{\beta} = \#\{i:p^{*}_i = \beta\}$ such that $N_{\alpha}+N_{\beta} = |I|\leqslant N$. 
Let $g = (g_1,g_2):\mathbb{R}^2 \rightarrow \mathbb{R}^2$ be a function of $\alpha,\beta$ with parameters $N_{\alpha},N_{\beta}$,
such that
$g_1$ and $g_2$ are the left-hand sides of the first and second equation of 
\eqref{eq:mixture_of_two} respectively.
The jacobian of $g$ with respect to $\alpha,\beta$ equals
\begin{align*}
\mathrm{det} \left[ \frac{\partial g}{\partial (\alpha,\beta)} \right]
= \mathrm{det}
\left[
\begin{array}{rl}
N_{\alpha} &\ N_{\alpha}\left(\log\frac{1}{\alpha}-1 \right)\\
N_{\beta} &\ N_{\beta}\left(\log\frac{1}{\beta}-1 \right)
\end{array}
\right] = N_{\alpha}N_{\beta}\left(\log\frac{1}{\beta}-\log\frac{1}{\alpha} \right).
\end{align*}
Since the optimal solution is not flat, we have $N_{\alpha},N_{\beta} \not= 0$ and $\alpha\not=\beta$. It follows that
$\mathrm{det} \frac{\partial g}{\partial (\alpha,\beta)} \not= 0$.
By the implicit function theorem~\cite{krantz2012implicit}, for any $c_1,c_2$
there is at most one solution to $g(\alpha,\beta) = (c_1,c_2)$.
In particular, setting $c_1=1,c_2 = k$ we conclude that \eqref{eq:mixture_of_two} has at most one solution for fixed parameters $N_{\alpha}, N_{\beta}$. There are finitely many choices for these paramters and the claim follows. \qed
\end{proof}
We will argue that it can be assumed that the hessian matrix of $L$ at the optimal point $p^{*}$ is negatively defined. 

\begin{claim}[The hessian diagonal is non-zero at the optimal point]
For all but finitely many values of $t$, at the optimal point we have $\frac{\partial^2 L}{\partial p_i^2} \not= 0$ for all $i$.
\end{claim}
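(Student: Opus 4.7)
The plan is to unwind the condition $\partial^2 L/\partial p_i^2 = 0$ directly from the Lagrangian in \eqref{eq:lagrange}. Differentiating twice yields
\begin{align*}
\frac{\partial^2 L}{\partial p_i^2}(p^{*}) = t(1+t)(p^{*}_i)^{t-1} + \frac{\lambda_1}{p^{*}_i},
\end{align*}
so vanishing of this quantity at an index $i$ with $p^{*}_i \in \{\alpha,\beta\}$ forces either $\lambda_1 = -t(1+t)\alpha^{t}$ or $\lambda_1 = -t(1+t)\beta^{t}$. Since $\alpha\ne\beta$, both options cannot hold simultaneously (unless $t \in \{-1,0\}$, which are easily excluded), so it suffices to prevent each of the two options separately.

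The first step is to pin down $\lambda_1$ in terms of $t,\alpha,\beta$ using the first order conditions \eqref{eq:foc}. Subtracting the equations at $p_i = \alpha$ and $p_i = \beta$ eliminates $\lambda_2$ and gives
\begin{align*}
\lambda_1 = \frac{(1+t)(\alpha^{t}-\beta^{t})}{\log(\beta/\alpha)}.
\end{align*}
Substituting into the vanishing condition at, say, $p_i = \alpha$ leads to the equation
\begin{align*}
F_{\alpha,\beta}(t) \;\stackrel{\text{def}}{=}\; \alpha^{t}-\beta^{t} + t\,\alpha^{t}\log(\beta/\alpha) \;=\; 0,
\end{align*}
and an analogous equation $F_{\beta,\alpha}(t)=0$ for the case $p_i=\beta$. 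For each fixed pair $(\alpha,\beta)$ the function $F_{\alpha,\beta}$ is real analytic in $t$.

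The key step is to verify that $F_{\alpha,\beta}$ is not identically zero. Expanding to second order around $t=0$ with $c=\log(\beta/\alpha)\ne 0$ I expect to get $F_{\alpha,\beta}(t) = -\tfrac{c^2}{2}t^2 + O(t^3)$, so the function is a non-trivial analytic function of $t$; hence it has only isolated zeros in any compact subinterval of $\mathbb{R}$. To exclude accumulation of zeros at infinity I would note that for $0<\alpha<\beta<1$ one has $\alpha^{t}/\beta^{t}\to 0$ as $t\to\infty$, so $F_{\alpha,\beta}(t)\sim -\beta^{t}$ eventually, giving a definite sign for large $t$ and therefore only finitely many zeros on $[-1,\infty)$.

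Finally, I invoke the previous claim: for fixed $k$ there are only finitely many candidate pairs $(\alpha,\beta,N_\alpha,N_\beta)$ for the optimal $p^{*}$, independently of $t$. The union over this finite set of the (finite) zero sets of $F_{\alpha,\beta}$ and $F_{\beta,\alpha}$ is still finite, so for all but finitely many $t$ we have $\partial^2 L/\partial p_i^2(p^{*})\ne 0$ at every $i\in I$, as claimed. I anticipate the main obstacle to be the non-triviality argument for $F_{\alpha,\beta}$ together with the asymptotic sign analysis, since the rest of the argument is essentially bookkeeping on top of the two earlier claims.
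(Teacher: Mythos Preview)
Your argument is correct and follows the same route as the paper: derive the second derivative, combine the vanishing condition with the two first-order equations \eqref{eq:foc} to eliminate $\lambda_1,\lambda_2$, obtain a single transcendental equation in $t$ for each fixed pair $(\alpha,\beta)$, and then invoke the previous claim that only finitely many pairs $(\alpha,\beta)$ can arise. Your equation $F_{\alpha,\beta}(t)=\alpha^{t}-\beta^{t}+t\alpha^{t}\log(\beta/\alpha)=0$ is exactly the paper's equation $t\alpha^t-\frac{\alpha^t-\beta^t}{\log\alpha-\log\beta}=0$ after clearing the denominator.

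The only genuine difference is in how the finiteness of the zero set in $t$ is justified. The paper asserts that the equation reduces to a form ``$t c_1^{t}+c_2=0$'' and concludes finiteness from that; this reduction is not quite right as written (after dividing by $\alpha^t$ one gets $(\beta/\alpha)^t + t\log(\alpha/\beta)-1=0$, an exponential plus a linear term, which has at most two roots by convexity). You instead argue via real-analyticity: the Taylor expansion $F_{\alpha,\beta}(t)=-\tfrac{1}{2}\log^2(\beta/\alpha)\,t^2+O(t^3)$ shows non-triviality, and the asymptotic $F_{\alpha,\beta}(t)\sim -\beta^{t}$ as $t\to\infty$ rules out accumulation of zeros at infinity. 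Both arguments are valid; yours is more explicit and avoids the paper's slightly inaccurate algebraic reduction, while the convexity observation behind the paper's intended argument actually gives the sharper bound of at most two exceptional $t$ per pair $(\alpha,\beta)$.
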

\begin{proof}[Proof of Claim]
We have
\begin{align}\label{eq:lagrange_hessian}
\frac{\partial^2 L}{\partial p_i p_j} = 
\left\{
\begin{array}{rl}
0, & i\not=j \\
 t(1+t)p_i^{t-1}+\lambda_1 p_i^{-1},  & i=j.
\end{array}
\right.
\end{align}
Therefore if $\frac{\partial^2 L}{\partial p_i^2}(p^{*}) = 0$ for some $i$, then
$t(1+t)(p^{*}_i)^{t-1}+\lambda_1 (p^{*}_i)^{-1} = 0$. Combining the last claim with
\eqref{eq:foc} and assuming, without loss of generality, that $p^{*}_i = \alpha$ ($p^{*}_i = \beta$ is analogues) we obtain
\begin{align}\label{eq:soc_degenrated}
\begin{array}{r}
t(1+t)\alpha^{t-1}+\lambda_1 \alpha^{-1} = 0 \\
(1+t)\alpha^{t} - \lambda_1\left(\log\frac{1}{\alpha}-1 \right) - \lambda_2 = 0\\
(1+t)\beta^{t} - \lambda_1\left(\log\frac{1}{\beta}-1 \right) - \lambda_2 = 0
\end{array}
\end{align}
Computing $\lambda_1$ from the second and third equation and plugging into the first equation we obtain (note that $\alpha\not=\beta$ is guaranteed)
\begin{align*}
t\alpha^t - \frac{\alpha^t-\beta^t}{\log{\alpha}-\log{\beta}} = 0.
\end{align*}
This is equivalent to
$t c_1^t  + c_2 = 0$
where $\mathrm{poly}(t)$ is a polynomial in $t$ and $c_1,c_2$ are constants depending on $\alpha,\beta$. It follows that $t$ takes only finitely many values for fixed $\alpha,\beta$. Since $\alpha,\beta$ take finitely many values, by the last claim, there are finitely 
many numbers $t$ such that \eqref{eq:soc_degenrated} is satisfied by some $\alpha,\beta$.
\qed
\end{proof}
\begin{claim}[The hessian is negative definite,
or the characterization \eqref{eq:maximize_shape} holds.]
For all but finitely many $t$, the optimal point is 
as in \eqref{eq:maximize_shape} or makes the hessian of $L$ negative definite.
\end{claim}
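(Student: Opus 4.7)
The plan is to assume $\nabla^2_p L(p^*)$ is not negative definite and then use the second-order necessary conditions (SOC) along a carefully chosen tangent direction to force the shape \eqref{eq:maximize_shape}. From \eqref{eq:lagrange_hessian} this Hessian is diagonal, and since $p^*$ takes at most two values on $I$ so do its diagonal entries:
\begin{align*}
H_\alpha = t(1+t)\alpha^{t-1}+\lambda_1\alpha^{-1},\qquad H_\beta = t(1+t)\beta^{t-1}+\lambda_1\beta^{-1}.
\end{align*}
Hence negative definiteness is equivalent to $H_\alpha<0$ and $H_\beta<0$. If the Hessian is not negative definite, $H_\alpha\geqslant 0$ or $H_\beta\geqslant 0$; the previous claim (valid outside a finite set of exceptional $t$) upgrades this to strict inequality, so without loss of generality $H_\alpha > 0$.

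The main step is to construct a tangent perturbation that violates SOC whenever two indices share the value $\alpha$. Picking $i_1\neq i_2$ in $I$ with $p^*_{i_1}=p^*_{i_2}=\alpha$, I set $d = e_{i_1}-e_{i_2}$. Since both active constraints $\sum_i p_i=1$ and $H(p)=k$ are symmetric in the coordinates, their gradients take the same value at $i_1$ and $i_2$, giving $\sum_i d_i = 0$ and $\sum_i d_i(\log(1/p^*_i)-1)=0$; hence $d$ lies in the tangent space. The associated quadratic form is
\begin{align*}
d^{T}\nabla^2_p L(p^*)\,d = 2H_\alpha > 0,
\end{align*}
contradicting the SOC for a local maximum, which requires $\nabla^2_p L(p^*)$ to be negative semidefinite on the tangent space. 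Therefore $N_\alpha\leqslant 1$, and since the optimum is not flat (both values are attained), $N_\alpha=1$. The case $H_\beta>0$ yields $N_\beta=1$ by the same argument; in either case one of the two weights occurs exactly once, which is precisely \eqref{eq:maximize_shape} after relabeling.

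The only delicate step is the tangency check for $d=e_{i_1}-e_{i_2}$, which is immediate from the symmetry of the constraints (their gradient components agree at indices with equal $p^*_i$). The essential use of the previous claim is in upgrading $H_\alpha\geqslant 0$ to $H_\alpha>0$: without strictness the quadratic form could vanish and yield no contradiction, which is precisely why the conclusion must be qualified by ``all but finitely many $t$''.
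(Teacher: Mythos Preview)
Your proof is correct and rests on the same key device as the paper's: applying the second-order conditions to the tangent direction $d=e_{i_1}-e_{i_2}$ supported on two indices sharing the same value, and invoking the previous claim to rule out vanishing diagonal entries. The organization differs, however. The paper argues by cases on $(N_\alpha,N_\beta)$: when both exceed $1$ the SOC forces every diagonal entry nonpositive, hence (by the previous claim) strictly negative; when $N_\alpha>1$ and $N_\beta=1$ the paper additionally invokes the monotonicity observation \eqref{eq:maximize_convex_threshold} to propagate $H_\alpha<0$ to $H_\beta<0$; the remaining case $N_\alpha=1$ is the shape. You instead take the contrapositive directly: assume not negative definite, pick the value $\gamma\in\{\alpha,\beta\}$ with $H_\gamma>0$, and deduce $N_\gamma=1$. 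This sidesteps the threshold argument \eqref{eq:maximize_convex_threshold} entirely, at the cost of concluding only that \emph{some} weight occurs once rather than a specific one; you correctly note this is \eqref{eq:maximize_shape} after relabeling, which suffices for the downstream use.
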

\begin{proof}[Proof of Claim]
Let $t$ be as in the last claim and let $p^{*}$ be optimal for \eqref{eq:maximize_3}. By the second order conditions we have
\begin{align}\label{eq:negative_hessian_on_feasible}
d^{T}\cdot \frac{\partial^2 L}{\partial p^2}(p^{*})\cdot d \leqslant 0
\quad \text{for } d\in\mathbb{R}^{|I|}:\ \sum_{i\in I}\left(\log\frac{1}{p_i}-1\right)d_i = 0,
\sum_{i\in I}p_i d_i = 0
\end{align}
Define  $N_{\alpha} = \#\{i:p_i= \alpha\}$ and $N_{\beta} = \#\{i: p_i = \beta\}$. 
Suppose first that $N_{\alpha},N_{\beta} > 1$. Choosing $d_{i_1} = \pm \delta, d_{i_2} = \mp\delta$ for $i_1\not=i_2$ such that $p_{i_1} = p_{i_2} = \alpha$ or
 $p_{i_1} = p_{i_2} = \beta$ in \eqref{eq:negative_hessian_on_feasible}
 and using \eqref{eq:lagrange_hessian} yields
 $0\geqslant \left(\frac{\partial^2 L}{\partial p_{i_1}^2} (p^{*}) + 
 \frac{\partial^2 L}{\partial p_{i_2}^2} (p^{*})\right)\cdot \delta^2 = 
2\frac{\partial^2 L}{\partial p_{i_1}^2} (p^{*}) \cdot \delta^2 $ for all $\delta > 0$. Therefore 
\begin{align*}
 \forall i\in I:\quad \frac{\partial^2 L}{\partial p_i^2}(p^{*}) \leqslant 0.
\end{align*}
By the assumption on $t$ and the previous claim, this implies that
$L$ is negative definite at $p^{*}$.

Assume now $N_{\alpha}>1$ but 
$N_{\beta}  =1$. As in the previous part we show that
$\frac{\partial^2 L}{\partial p_i^2}(p^{*}) < 0$ for all $i$ such that $p^{*}_i = \alpha$. 
By \Cref{eq:lagrange_hessian} there exists $c$ such that
\begin{align}\label{eq:maximize_convex_threshold}
\forall i\in I:\quad \mathrm{sgn} \frac{\partial L^2}{\partial p_i^2}(p) 
= \left\{
\begin{array}{rl}
1 & p_i > c \\
0 & p_i = c \\
-1 & p_i < c
\end{array}
\right.
\end{align}
where $c$ depends on $t$ and $\lambda_1$. Hence, if $p^{*}_{i_1} =\alpha$ and $p^{*}_{i_2} = \beta$ then
$\frac{\partial^2 L}{\partial p_{i_1}^2}(p^{*})<0$ means $\frac{\partial^2 L}{\partial p_{i_2}^2}(p^{*})<0$. Therefore $L$ is negative definite at $p^{*}$.

Since we have $N_{\alpha},N_{\beta} \geqslant 1$ (as $p^{*}$ is not a flat distribution) the remaining case is $N_{\alpha} = 1$. But this is precisely
\eqref{eq:maximize_shape}. \qed
\end{proof}

\begin{claim}[The negative definite hessian implies  the characterization \eqref{eq:maximize_shape}]
If the hessian of $L$ is negative definite at the optimal point $p^{*}$,
then $p^{*}$ satisfies \eqref{eq:maximize_shape}.
\end{claim}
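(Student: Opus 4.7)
The goal is to show that, under negative definiteness of the Hessian of $L$ at $p^{*}$, the vector $p^{*}$ must satisfy \eqref{eq:maximize_shape}, i.e., one of the two values in the support appears exactly once.

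\textbf{Step 1: localize $p^{*}$ in the concave region.} The Hessian of $L$ is diagonal with entries $\partial^2 L/\partial p_i^2 = t(1+t)(p_i)^{t-1} + \lambda_1(p_i)^{-1}$, so negative definiteness at $p^{*}$ is equivalent to
\[
 t(1+t)(p^{*}_i)^{t}+\lambda_1 < 0 \qquad \text{for all } i\in I.
\]
The function $u\mapsto t(1+t)u^{t}+\lambda_1$ is strictly monotone on $(0,1)$ (its derivative never vanishes for $t\neq 0$), so this inequality cuts out an interval of the form $(0,c)$ for a threshold $c=c(t,\lambda_1)$. Hence every entry of $p^{*}$ lies in the open convex set $R := \{p : 0 < p_i < c\}$. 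On $R$ the Hessian of $L$ is negative semidefinite everywhere, so $L$ is concave on $R$; being also permutation-symmetric in $p$, it is S-concave on $R$ by \Cref{lemma:sconvex_criteria}.

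\textbf{Step 2: Schur-concavity forces the concentrated shape.} Combine the preceding claim ($p^{*}_i\in\{\alpha,\beta\}$) with Step 1: both $\alpha,\beta$ lie in $(0,c)\subset R$. Write $N_\alpha=\#\{i:p^{*}_i=\alpha\}$, $N_\beta=\#\{i:p^{*}_i=\beta\}$, and suppose for contradiction that $N_\alpha\geqslant 2$ and $N_\beta\geqslant 2$. Applying the implicit function theorem to the system \eqref{eq:mixture_of_two} with $(N_\alpha,N_\beta)$ replaced by $(1,N_\alpha+N_\beta-1)$ produces an alternative vector $\tilde p$ still satisfying $\sum_i \tilde p_i=1$ and $H(\tilde p)=k$; tracking the dependence of the two unknown weights on the multiplicities shows that $\tilde p$ continues to lie in $R$. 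A direct comparison of the sorted partial sums places $p^{*}$ and $\tilde p$ in a majorization relation, and S-concavity then gives $L(\tilde p) \geqslant L(p^{*})$; since $\tilde p\neq p^{*}$ and $L$ is strictly concave on $R$, the inequality is strict, contradicting the optimality of $p^{*}$.

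\textbf{Main obstacle.} The delicate part is the combined check that (i) $\tilde p$ stays inside the concave region $R$ and (ii) the majorization orders $p^{*}$ and $\tilde p$ in the direction that makes S-concavity bite. Both facts rest on an analysis of how $(\alpha,\beta)$ depend monotonically on $(N_\alpha,N_\beta)$ along the solution locus of \eqref{eq:mixture_of_two}, which is obtained by implicit differentiation using the nonzero Jacobian computed in the earlier claim. If either property fails, I fall back to an infinitesimal variant: perturb $p^{*}$ along a tangent direction that deforms the two-block pattern toward the \eqref{eq:maximize_shape} shape, and use the strict negativity of $h(\alpha)$ and $h(\beta)$ together with local S-concavity to show that $L$ strictly increases to second order, producing the same contradiction.
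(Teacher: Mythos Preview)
Your Step 1 is correct and matches the paper: negative definiteness of the (diagonal) Hessian forces every $p^{*}_i$ to lie in the open region $R=\{p:p_i<c\}$, on which $L$ is concave and hence S-concave.

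Your Step 2, however, misses the key simplification and therefore has genuine gaps. The point you are not using is that a vanishing gradient together with a negative definite Hessian makes $p^{*}$ a \emph{strict local maximizer of the unconstrained Lagrangian} $L$ (with $\lambda_1,\lambda_2$ frozen). Consequently you do not need a competitor $\tilde p$ that still satisfies $H(\tilde p)=k$ and $\sum_i\tilde p_i=1$; any small perturbation of $p^{*}$ that increases $L$ already gives a contradiction. The paper just picks two indices $i_1\neq i_2$ with $p^{*}_{i_1}\neq p^{*}_{i_2}$ (both in $(0,c)$ by Step~1), moves mass $\delta$ from the larger to the smaller, observes that the perturbed point stays in $R$ for small $\delta$ and is majorized by $p^{*}$, and invokes S-concavity of $L$ on $R$ to obtain $L(p')>L(p^{*})$. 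This forces all entries of $p^{*}$ to coincide, contradicting the earlier claim that the optimal point is not flat; hence the negative-definite-Hessian branch is vacuous and \eqref{eq:maximize_shape} follows.

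By contrast, your global construction of $\tilde p$ is not justified: the implicit function theorem, applied to \eqref{eq:mixture_of_two}, gives local uniqueness of $(\alpha,\beta)$ near a given solution for fixed $(N_\alpha,N_\beta)$, not existence of a solution for the different parameters $(1,N_\alpha+N_\beta-1)$; you give no argument that the resulting $\tilde p$ lies in $R$; and you leave the majorization direction open. In fact, once both $p^{*}$ and $\tilde p$ are feasible, $L$ reduces to $\sum_i p_i^{1+t}$, which for $t>0$ is Schur-\emph{convex}, so S-concavity of $L$ on $R$ and the natural majorization $\tilde p\succ p^{*}$ point in opposite directions. Your infinitesimal fallback is closer to the paper's argument, but speaking of ``tangent directions'' suggests you are still perturbing along the constraint manifold; the decisive observation is precisely that you may perturb $L$ \emph{freely}, which makes the two-coordinate transfer argument immediate.
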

\begin{proof}[Proof of Claim]
Let $p^{*}$ be optimal. Since $\frac{\partial L}{\partial p}(p^{*}) = 0$
by the first order conditions, $p^{*}$ is a local maximizer of $L$ (with $\lambda_1,\lambda_2$ being fixed parameters).
Consider $L$ as a function of $p_{J} = (p_{j})_{j\in J}$
for a fixed subset of indices $J$. Let $c$ be as in
\eqref{eq:maximize_convex_threshold}. By
\eqref{eq:maximize_convex_threshold} and \eqref{eq:lagrange_hessian}
 $L$ is convex
for $p\in S^{+} = \cap_{j\in J}\{ p_j > c\}$
and concave for $p\in S^{-} = \cap_{j\in J}\{ p_j < c\}$, where $c$ is a constant (depends on $\lambda_1,t$). 

Let $i_1\not=i_2$ be such that $p^{*}_{i_1}=p^{*}_{i_2} > c$ for some $i_1\not=i_2$.
Take $J = \{i_1,i_2\}$, fix a positive small number $\delta$ and
define $p'_{i_1} = p^{*}_{i_1}+\delta$, $p'_{i_2} = p^{*}_{i_2}-\delta$
and $p'_i = p^{*}_i$ when $i\not\in\{i_1,\i_2\}$. 
Note that $p'$ majorizes $p^{*}$ and $p',p^{*} \in S^{+}$ for $\delta$ sufficiently small. Because $L$ is symmetric in variables $\{p_j\}_{j\in J}$, by Schur convexity
we have $L(p') > L(p^{*})$. This shows that
there is at most one $i$ such that $p^{*}_i > c$. 

Similarily, take $i_1\not=i_2$ such that $p^{*}_{i_1}<p^{*}_{i_2} < c$.
Let $J = \{i_1,i_2\}$, fix a positive small number $\delta$ and
define $p'_{i_1} = p^{*}_{i_1}+\delta$, $p'_{i_2} = p^{*}_{i_2}-\delta$
and $p'_i = p^{*}_i$ when $i\not\in\{i_1,\i_2\}$. 
Note that $p$ is majorized by $p^{*}$ and $p',p^{*} \in S^{+}$ for $\delta$ sufficiently small. Because $L$ is symmetric in variables $\{p_j\}_{j\in J}$, by Schur convexity we have $L(p') > L(p^{*})$.
This shows that $p^{*}_{i_1}=p^{*}_{i_2} $
whenever $p^{*}_{i_1},p^{*}_{i_2} < c $.

In the first part we established that $\{i: p_i = \alpha\}$ for one $i$, the second part implies $p_j = \beta$ for $j\not=i$. This finishes the proof of  the claim. \qed
\end{proof}

The last two claims imply that the solution to \eqref{eq:maximize_2} is characterized by \eqref{eq:maximize_shape} (for all but finitely many $t$). We now show that probability weights are not zero.
\begin{claim}[The optimal point has only positive entries]
If the optimal point $p^{*}$ is as in \eqref{eq:maximize_shape}
then it satisfies \eqref{eq:maximize_shape_0}.
\end{claim}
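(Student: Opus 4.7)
The plan is to derive a contradiction by showing that any candidate optimal point with a zero entry can be strictly improved upon by enlarging its support. Suppose $p^{*}$ has $m^{*} = |I| < N$ non-zero entries. By the preceding claim, these entries take exactly two values: a single entry equal to $\alpha$ and $m^{*}-1$ entries equal to $\beta$, with $(\alpha,\beta)$ determined by the system $\alpha + (m^{*}-1)\beta = 1$ and $-\alpha\log\alpha - (m^{*}-1)\beta\log\beta = k$.

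To set up the contradiction I would treat the support size $m$ as a continuous parameter. Since the Jacobian of the defining system in $(\alpha,\beta)$ is non-zero (this was established in the earlier uniqueness claim), the implicit function theorem produces smooth functions $\alpha(m), \beta(m)$ with $\alpha(m) > \beta(m) > 0$ on an open interval containing $\{m^{*}, m^{*}+1\}$. Define
\begin{align*}
F(m) = \alpha(m)^{1+t} + (m-1)\beta(m)^{1+t},
\end{align*}
so that $F(m^{*})$ is the objective value of $p^{*}$ while $F(m^{*}+1)$ is the objective value of the analogous two-value distribution supported on $m^{*}+1$ coordinates (which is feasible since $m^{*}+1 \leqslant N$). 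It then suffices to prove that $F$ is strictly increasing on this interval.

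The core calculation is therefore the sign of $F'$. Implicitly differentiating the two constraints yields $\dot\alpha = \beta/\ell$ and $\dot\beta = -\beta(\ell+1)/[(m-1)\ell]$, where $\ell = \log(\alpha/\beta) > 0$. Substituting into $\dot F$ and collecting terms gives
\begin{align*}
F'(m) = \frac{(1+t)\beta(\alpha^t - \beta^t)}{\ell} - t\beta^{1+t},
\end{align*}
so after setting $u = \alpha/\beta > 1$ the desired positivity $F'(m) > 0$ reduces to the one-variable inequality $(1+t)(u^t - 1) > t\log u$, to be established for all $u > 1$ and all $t > -1$ with $t \neq 0$.

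Finally, I would verify this inequality by defining $g(u) = (1+t)(u^t-1) - t\log u$: clearly $g(1)=0$, while $g'(u) = (t/u)\left[(1+t)u^t - 1\right]$. A short sign analysis split into $t > 0$ and $-1 < t < 0$ shows that both factors of $g'(u)$ share the same sign for $u > 1$, hence $g'(u) > 0$ and $g(u) > 0$ on $(1,\infty)$. The edge cases are handled directly: at $t = -1$ the objective equals the support size $|I|$, which is maximized precisely when $|I| = N$; at $t = 0$ the objective is constantly $1$, so an optimum with no zero entries may be chosen trivially. The main obstacle is the algebraic bookkeeping needed to turn the implicit differentiation of the two constraints into the clean expression for $F'$ above; once that reduction is made, the remaining one-variable inequality is routine.
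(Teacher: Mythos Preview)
Your proposal is correct and follows the same high-level strategy as the paper: relax the support size to a continuous parameter and show that the objective strictly increases in it, forcing the optimum to have full support. The execution, however, differs. The paper writes the two-variable program in $(\delta,v)$, checks LICQ, writes the first-order KKT conditions with a multiplier $\lambda$, and then derives a sign contradiction between the two stationarity equations to rule out any interior critical point; from there it concludes that the maximum is at the boundary $v=N$. You instead invoke the implicit function theorem to resolve $\alpha(m),\beta(m)$, compute $F'(m)$ explicitly, and reduce its positivity to the single-variable inequality $(1+t)(u^{t}-1)>t\log u$ for $u>1$, which you verify directly. Your route is more self-contained and actually proves strict monotonicity of $F$, which makes the conclusion $m=N$ immediate (the paper's KKT argument only excludes interior stationary points and then has to argue about the boundary).

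One minor point worth tightening: the implicit function theorem is local, so saying it yields $\alpha(m),\beta(m)$ on an interval containing both $m^{*}$ and $m^{*}+1$ is a small leap. This is easy to fill in, for instance by noting that along the branch with $\alpha>\beta$ the entropy is strictly decreasing in $\alpha$ (since $\partial H/\partial\alpha=\log(\beta/\alpha)<0$), so for every real $m\in[m^{*},m^{*}+1]$ there is a unique solution with $\alpha(m)>\beta(m)>0$, and the nonvanishing Jacobian then gives smooth dependence along the whole segment.
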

\begin{proof}[Proof of Claim]
Let $p^{*}\in\mathbb{R}^N$ be as in \eqref{eq:maximize_shape}. Let $v = \#\{ i:p^{*}_i > 0\}$, and $\delta = \alpha$. Then $p_{i_0} = \delta$ and
$p_{i} = \frac{1-\delta}{v}$ for some other $v-1$ values of $i$.
Moreover we have $\delta > \frac{1-\delta}{v}$.
Since $p^{*}$ is not uniform we have $v > 2^k$. Therefore
$(\delta,v)$ solves the following program over $\delta \in (0,1)$ and integers $N\geqslant v>2^{k}$:
\begin{align}\label{eq:full_support}
\begin{aligned}
\underset{\delta,v}{\mathrm{maximize}} &&  \delta^{1+t} + (1-\delta)\left(\frac{1-\delta}{v}\right)^{t} \\
\mathrm{s.t.} && 
\left.
\begin{array}{r}
\delta\log\frac{1}{\delta} + (1-\delta)\log\frac{v}{1-\delta} = k 
\end{array}
\right.
\end{aligned}
\end{align}
Consider this program under the relaxed assumption that $2^k < v \leqslant N$. We show that the maximum is achieved for $v = N$. Indeed
if $2^k < v < N$ then the gradient of the active constraint is
\begin{align*}
\nabla_{u,v} \left( \delta\log\frac{1}{\delta} + (1-\delta)\log\frac{v}{1-\delta} -k\right) = \left( \log\frac{1-\delta}{v \delta},\frac{1-\delta}{v}  \right)
\end{align*}
and hence satisfies the LICQ condition. The first order conditions yield
\begin{align}
\begin{aligned}{r}
(1+t)\delta^{t}-(1+t)(1-\delta)^{t}v^{-t} & = \lambda \cdot \log\frac{1-\delta}{v \delta} \\
t(1-\delta)^{t+1}v^{-t-1} & = \lambda \cdot \frac{1-\delta}{v}.
\end{aligned}
\end{align}
The second equation implies $\lambda > 0$. The left-hand side of the first equation can be rewritten as
$(1+t)\delta^{t}\left(1- \left(\frac{1-\delta}{\delta v}\right)^t\right) $
and, because $t,\delta  > 0$ its sign equals $\mathrm{sgn}\left(1-\frac{1-\delta}{\delta v}\right)$.
In turn the sign of the right-hand side
equals $\mathrm{sgn}(\lambda)\cdot \mathrm{sgn}\left(\frac{1-\delta}{v\delta}-1 \right)$. Note that, because $\lambda>0$,
the signs are opposite
unless $\frac{1-\delta}{v\delta}=1$. this is not possible as by the assumption on $p^{*}$
we have $\delta\not=\frac{1-\delta}{v}$. This shows that
\eqref{eq:full_support} must be maximized at $v = N$, in particular 
$p^{*}_i\not=0$ for all $i$.
\qed
\end{proof}

\end{proof}

\subsection{Improved sub-exponential tails}\label{sec:improved_subexp}

The following lemma parameterizes 
\eqref{eq:optimal_distribution} in terms of the entropy defficiency.

\begin{lemma}[Entropy defficiency as a function of bias]
\label{lemma:defficiency_from_bias}
Let $X$ be as in \Cref{eq:optimal_distribution}.
Then the bias $\gamma = \theta-\frac{1}{|\mathcal{X}|}$ and the entropy defficiency $\Delta = \log|\mathcal{X}|-H(X)$ are related as in \Cref{tab:1}.
\end{lemma}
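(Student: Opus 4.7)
The plan is to write the deficiency explicitly as the Kullback--Leibler divergence of $P_X$ from the uniform distribution on $\mathcal{X}$, and then invert the resulting transcendental equation $\Delta = \Delta(\gamma)$ in two asymptotic regimes separated by the threshold $\Delta \asymp 1/N$, which corresponds to the regimes later appearing in \Cref{cor:sub_exp_tails_improved}. Writing $N=|\mathcal{X}|$ and substituting $\theta = 1/N + \gamma$ into $H(X) = \theta\log\frac{1}{\theta} + (1-\theta)\log\frac{N-1}{1-\theta}$, one obtains the closed form
\begin{equation*}
\Delta(\gamma) = \left(\tfrac{1}{N}+\gamma\right)\log(1+N\gamma) + \left(1-\tfrac{1}{N}-\gamma\right)\log\left(1-\tfrac{N\gamma}{N-1}\right),
\end{equation*}
valid for $\gamma\in[0,1-1/N]$. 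This is precisely $D(P_X\|U_{\mathcal{X}})$, and its standard properties (non-negativity, strict convexity, vanishing only at $\gamma=0$) show that $\Delta(\gamma)$ is a strictly increasing bijection onto $[0,\log N]$, so the inversion $\gamma=\gamma(\Delta)$ is well-defined on the range of interest.

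For the \emph{small-bias regime} (giving the $\Delta < 1/N$ row of \Cref{tab:1}), I would Taylor-expand $\log(1+x)$ to second order around $x=0$ in both terms. The first-order parts cancel exactly because $\theta$ and the complementary weights still sum to one, and what remains is the $\chi^2$-approximation to the KL divergence,
\begin{equation*}
\Delta(\gamma) \;=\; \frac{N^{2}\gamma^{2}}{2(N-1)\ln 2} \;+\; O\!\left(N^{2}\gamma^{3}\right),
\end{equation*}
so inverting yields $\gamma = \Theta\!\bigl(\sqrt{\Delta/N}\bigr)$, and the cubic remainder can be absorbed as long as $N\gamma=O(1)$, i.e., precisely in the regime $\Delta=O(1/N)$.

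For the \emph{large-bias regime} (the $\Delta > 1/N$ row), I would reparametrize using $\eta = 1-\theta$ and expand around $\eta=0$. Since $-(1-\eta)\log(1-\eta) = \eta + O(\eta^2)$ and $-\eta\log\frac{\eta}{N-1} = \eta\log\frac{N-1}{\eta}$, we get
\begin{equation*}
\log N - \Delta \;=\; H(X) \;=\; \eta\log\frac{e(N-1)}{\eta} + O(\eta^{2}).
\end{equation*}
This is a Lambert-$W$-type equation: solving for $\eta$ in terms of $\Delta$ gives $\eta \approx (\log N - \Delta)/\log\!\bigl(N/(\log N - \Delta)\bigr)$, and consequently $\gamma = 1 - 1/N - \eta$ is close to its extreme value, with the correction governed by a logarithmic factor. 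It is this logarithmic factor (specifically, $\log(N\Delta)$) which will resurface in the sub-exponential parameter of \Cref{cor:sub_exp_tails_improved}(a).

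The main obstacle, as previewed in the paragraph preceding the lemma, is the transcendental character of the equation $H(X)=k$: explicit inversion requires the Lambert-$W$ function, and one must carefully track the Taylor remainders to produce two-sided bounds (rather than leading asymptotics) and match them at the threshold $\Delta \asymp 1/N$, so that the piecewise entries of \Cref{tab:1} plug cleanly into the moment-generating-function computations of \Cref{sec:improved_subexp}.
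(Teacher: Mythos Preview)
Your small-bias analysis is fine and essentially matches the paper: you write $\Delta$ as the KL divergence $D(P_X\|U)$, Taylor-expand to second order, and obtain $\Delta=\Theta(\gamma^2 N)$ when $\gamma N=O(1)$. The $\chi^2$ framing is a nice touch but computationally identical to what the paper does.

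The large-bias analysis, however, has a real gap. You propose to reparametrize by $\eta=1-\theta$ and expand around $\eta=0$. But the regime $\gamma N=\omega(1)$ (equivalently $\Delta\gg 1/N$) is \emph{not} the regime $\theta\to 1$. For instance, $\gamma=10/N$ already satisfies $\gamma N=\omega(1)$, yet $\theta=11/N$ and $\eta=1-11/N\approx 1$, so your expansion $H(X)=\eta\log\frac{e(N-1)}{\eta}+O(\eta^2)$ is completely invalid there. Your expansion only covers the tiny sliver where $\Delta$ is close to $\log N$; it misses essentially the entire range $1/N\ll\Delta\ll\log N$, which is exactly the range of interest for \Cref{cor:sub_exp_tails_improved}(a). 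Consequently your Lambert-$W$ inversion, which concludes that $\gamma$ is ``close to its extreme value'', is simply false across most of the regime.

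The paper avoids this by working in the forward direction only (the lemma asks for $\Delta$ as a function of $\gamma$, not the inverse). Starting from your own identity $\Delta=(1/N+\gamma)\log(1+N\gamma)+(1-1/N-\gamma)\log(1-\tfrac{N\gamma}{N-1})$, one observes directly that once $\gamma N\to\infty$ the first logarithm is $\log(N\gamma)+O(1)$ and the second term is $O(\gamma)$, so $\Delta=\gamma\log(N\gamma)+O(\gamma)=\Theta(\gamma\log(N\gamma))$. No expansion around an endpoint, and no Lambert-$W$, is needed for the lemma itself.
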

The proof appears in \Cref{proof:defficiency_from_bias}.

\begin{table}
\centering
\renewcommand*{\arraystretch}{1.3}
\begin{tabular}{|c|c|c|l|}
\hline
bias & support & regime & entropy deficiency \\
\hline
\multirow{3}{*}{$\gamma$} & \multirow{3}{*}{$N$} & $\gamma N = \omega(1)$ & $\Delta = \Theta\left( \gamma\log \gamma N \right)$
\\ \cline{3-4}
& & $\gamma N = \Theta(1)$ & $\Delta = \Theta\left( \gamma \right)$
\\ \cline{3-4}
& & $\gamma N = O(1)$ & $\Delta = \Theta\left( \gamma^2 N \right)$ \\
\hline
\end{tabular}
\caption{Entropy defficiency as a function of bias.}
\label{tab:1}
\end{table}

\begin{proposition}[MGF as the function of bias]
Let $X$ be as in \Cref{eq:optimal_distribution}. Then
\begin{align}\label{eq:suprise_centered}
H(X)-\log\frac{1}{P_X(x)} &=
\left\{
\begin{array}{rl}
 (1-\theta)\log\frac{\theta(N-1)}{1-\theta}, & \quad x=x_0 \\
 -\theta\log\frac{\theta(N-1)}{1-\theta},    & \quad x\not=x_0
\end{array}\right.
\end{align}
\end{proposition}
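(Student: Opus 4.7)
The plan is a direct computation. Since the distribution has only two distinct probability values, I would first write down the Shannon entropy in closed form and then subtract the surprise in each of the two cases, simplifying.

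Concretely, using $P_X(x_0)=\theta$ and $P_X(x)=\frac{1-\theta}{N-1}$ for $x\neq x_0$ (where $N=|\mathcal{X}|$), the entropy telescopes to
\begin{align*}
H(X) \;=\; \theta\log\tfrac{1}{\theta} + (N-1)\cdot\tfrac{1-\theta}{N-1}\log\tfrac{N-1}{1-\theta}
\;=\; \theta\log\tfrac{1}{\theta} + (1-\theta)\log\tfrac{N-1}{1-\theta}.
\end{align*}
The two possible values of the surprise are $\log\tfrac{1}{\theta}$ (at $x_0$) and $\log\tfrac{N-1}{1-\theta}$ (elsewhere).

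For $x=x_0$ I would subtract $\log\tfrac{1}{\theta}$ from $H(X)$; the $\log\tfrac{1}{\theta}$ terms combine into a factor $(1-\theta)$, and then factoring $(1-\theta)$ out of the remaining two logarithms yields $(1-\theta)\log\tfrac{\theta(N-1)}{1-\theta}$. For $x\neq x_0$ I would subtract $\log\tfrac{N-1}{1-\theta}$ from $H(X)$; the $\log\tfrac{N-1}{1-\theta}$ terms combine into a factor $\theta\cdot(-1)$ times the log, and collecting gives $\theta\log\tfrac{1-\theta}{\theta(N-1)} = -\theta\log\tfrac{\theta(N-1)}{1-\theta}$.

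There is really no obstacle here — this is two lines of log algebra. The only minor sanity check worth noting is that these two values of the centered surprise have opposite signs and are weighted consistently: the probability-weighted sum $\theta\cdot(1-\theta)\log\tfrac{\theta(N-1)}{1-\theta} + (1-\theta)\cdot(-\theta)\log\tfrac{\theta(N-1)}{1-\theta} = 0$, confirming that the centered surprise has mean zero, as it must by definition.
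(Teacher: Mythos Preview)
Your proposal is correct and is precisely the natural approach; the paper states this proposition without proof, treating it as an immediate computation, so your direct calculation of $H(X)$ followed by subtracting the two surprise values is exactly what is intended.
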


\begin{lemma}[Sub-exponential tails of the surprise]
Let $X$ be as in \Cref{eq:optimal_distribution}. 
When $\gamma N = \omega(1)$ then the surprise is sub-exponential
with $\sigma^2 = \gamma\log^2 (\gamma N)$ and $b = \log (\gamma N)$. 
When $\gamma N = O(1)$ then the surprise is sub-exponential
with $\sigma^2 = \gamma^2 N$ and $b = 2$. 
\end{lemma}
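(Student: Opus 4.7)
The plan is to view the centered surprise $Y = \log(1/P_X(x)) - H(X)$ as a scaled centered Bernoulli and then apply a Bernstein-type MGF bound. From \eqref{eq:suprise_centered}, $Y$ takes the value $\theta L$ with probability $1-\theta$ and the value $-(1-\theta)L$ with probability $\theta$, where $L = \log(\theta(N-1)/(1-\theta))$. Equivalently $Y = L(B-q)$ with $B \sim \mathrm{Bern}(q)$ and $q = 1-\theta$, so $Y$ is a bounded zero-mean random variable with variance $v = L^2\theta(1-\theta)$ and range $|Y| \leq L$.

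I would then invoke the standard Bernstein MGF inequality for bounded zero-mean variables: $\log \mathbb{E}[e^{sY}] \leq vs^2/(2(1-|s|L/3))$ for $|s| < 3/L$, which for $|s|\leq 1/L$ simplifies to $\log\mathbb{E}[e^{sY}] \leq vs^2$. This already gives sub-exponentiality with parameters $(2v, L)$, so the task reduces to estimating $v$ and $L$ in the two regimes. One can alternatively start from the closed form $\log\mathbb{E}[e^{sY}] = -sLq + \log(1+q(e^{sL}-1))$ and apply $\log(1+x)\leq x$ together with $e^x - 1 - x \leq x^2$ for $|x|\leq 1$; Bernstein is preferred because it exploits the variance, which is much smaller than the crude range-based Hoeffding bound.

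In the regime $\gamma N = \omega(1)$, the expansion $\theta(N-1)/(1-\theta) = (\gamma N + 1)/(1 - \gamma N/(N-1))$ shows $L = \Theta(\log(\gamma N))$, while $\theta(1-\theta) \leq \theta \leq \gamma + 1/N = O(\gamma)$ (since $\gamma \gg 1/N$ in this regime), so $v = O(\gamma\log^2(\gamma N))$ and $b = O(\log(\gamma N))$, matching the statement up to constants. In the regime $\gamma N = O(1)$ I would use $\theta = O(1/N)$ and the Taylor bound $L = \log(1 + \Theta(\gamma N)) = O(\gamma N)$ to obtain $v \leq L^2\theta = O((\gamma N)^2/N) = O(\gamma^2 N)$ and $b = L = O(1)$; an appropriate constant in the definition of the regime $\gamma N = O(1)$ then allows one to take $b = 2$.

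The main obstacle I expect is uniform control of $\theta(1-\theta)L^2$ when $\gamma$ is close to $1 - 1/N$: the naive bound $\theta(1-\theta) \leq \theta$ becomes loose, $L$ may exceed $\log(\gamma N)$ (by up to a constant factor when $1-\theta = \Theta(1/N)$), and one has to verify that the product is still dominated by $C\gamma \log^2(\gamma N)$ — here the compensating smallness of $1-\theta$ is essential. Once this uniform estimate is in place, the sub-exponential conclusion follows directly from Bernstein's inequality.
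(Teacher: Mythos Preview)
Your approach is correct and essentially the same as the paper's. The paper expands the MGF as a Taylor series, computes the moments explicitly as
\[
M_j = \theta(1-\theta)\bigl((1-\theta)^{j-1}-(-\theta)^{j-1}\bigr)L^{j},
\]
bounds them by $\theta(1-\theta)L^{j}$, and sums; you package the same estimate by writing $Y=L(B-q)$ and invoking the Bernstein MGF bound, which internally uses exactly $|M_j|\le v\,L^{j-2}$ with $v=\theta(1-\theta)L^{2}$. The two routes coincide at the level of the actual inequalities, and your treatment of the two regimes (including the boundary case $\theta\to 1$, where $1-\theta$ compensates the growth of $L$) mirrors the paper's case split $\theta\lessgtr 1-2/N$.
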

\begin{proof}
Let $M_j  = \mathbb{E}_{x\sim X}\left( H(X)-\log\frac{1}{P_X(x)} \right)^{j}$. Note that $M_0 = 1$ and $M_1 = 0$.
By the expansion $\exp(u) = \sum_{j=0}^{\infty}\frac{u^j}{j!}$ we obtain
\begin{align*}
 \mathbb{E}_{x\sim X}\exp\left( tH(X)-t\log\frac{1}{P_X(x)} \right)^{j} =
 1+\sum_{j\geqslant 2}\frac{t^j}{j!}\cdot M_j
\end{align*}
By \Cref{eq:suprise_centered} we obtain
\begin{align}\label{eq:suprise_central_moment}
M_j = \theta(1-\theta)\left((1-\theta)^{j-1}-(-\theta)^{j-1} \right)\log^{j}\frac{\theta(N-1)}{1-\theta}
\end{align}
and hence
\begin{align*}
M_j \leqslant \theta(1-\theta)\cdot \log^{j}\left(1+\frac{\theta N-1}{1-\theta}\right).
\end{align*}
Note that we also have $M_j \leqslant 2^j\log^j N$ by the proof of 
\Cref{lemma:tail_no_constraints}.  Now we split our analysis into the following two cases\\
\underline{Case $\gamma N > 2$}. \\
Assume first that $\theta < 1-\frac{2}{N}$.
Let $\theta = \frac{1}{N}+\gamma$.
Then 
$\frac{\theta N-1}{1-\theta} > 2$ and thus
\begin{align*}
M_j \leqslant 2\theta(1-\theta)\log^{j}\left(\frac{\theta N-1}{1-\theta}\right).
\end{align*}
Moreover $\frac{1}{1-\theta}<\theta N -1$ because of $\theta < 1-\frac{2}{N}$. Therefore
\begin{align*}
M_j & \leqslant 4\theta(1-\theta)\log^{j}\left(\theta N-1\right) \\
& = O\left(\gamma\log^j N\gamma \right)
\end{align*}
For $\sigma^2 = \gamma\log^2 (\gamma N)$, $b = \log (\gamma N)$ and
 $|t| \leqslant \frac{1}{b}$ 
we obtain
\begin{align*}
1+\sum_{j\geqslant 2}\frac{t^j}{j!}M_j \leqslant 
\exp\left(O(1) \cdot \sigma^2 t^2 \right).
\end{align*}
which is also valid when $\theta \geqslant 1-\frac{2}{N}$. 
Note that we need $t\geqslant-1$ in \Cref{thm:maximize_solution}, which is automatically satisfied because $b \geqslant 1$.
\\
\underline{Case $\gamma N < 2$}. \\
We have then $\frac{\theta N-1}{1-\theta} = O(N\gamma)$
and by the Taylor expansion $\log(1+u) = O(u)$ valid for $u=O(1)$ we get
\begin{align*}
M_j \leqslant O\left(\frac{1}{N}\cdot (N\gamma)^j\right)
\end{align*}
For $\sigma^2 = \gamma^2 N$, $b = \gamma N$ and
 $|t| \leqslant \frac{1}{b}$.
we obtain
\begin{align*}
1+\sum_{j\geqslant 2}\frac{t^j}{j!}M_j \leqslant 
\exp\left(O(1) \cdot \sigma^2 t^2 \right).
\end{align*}
Note that we need $t\geqslant-1$ in \Cref{thm:maximize_solution},
for this we can assume $b = \max(\gamma N, 1)$.
\end{proof}
Having proved the last lemma, we are ready to derive
\Cref{cor:sub_exp_tails_improved}. 
\begin{proof}{Proof of \Cref{cor:sub_exp_tails_improved}}
We consider two cases \\
\underline{Case $\gamma N = \omega(1)$ } \\
By \Cref{lemma:defficiency_from_bias}, the assumption $\gamma N = \omega(1)$ is equivalent to
 $\Delta = \omega(N^{-1})$. Also,
 $$b = \log(\gamma N) = \Theta\left(\log(N\Delta) - \log\log(N\Delta)\right) = \Theta(\log (N\Delta))$$
and $$\sigma^2 = \gamma \log^2(\gamma N)  = \Theta\left(\Delta\log (N\Delta) \right).$$ 
  By \Cref{lemma:sub_exp_sum}, 
 the sum of $n$ such surprises is subexponential with $n \sigma^2$
 and $b$, hence the tail for $t < \sigma^2/b$ is
$$\exp\left(-\frac{t^2}{2 n \Delta\log(N\Delta) }\right)$$
\underline{Case $\gamma N = O(1)$} \\
By \Cref{lemma:defficiency_from_bias}, the assumption $\gamma N = O(1)$ is equivalent to $\Delta = O(N^{-1})$. Also,
$$ b = \max(1,\gamma N) = O(1) $$
and $\sigma^2 = \gamma^2 N = \Delta$. 
By \Cref{lemma:sub_exp_sum}, 
 the sum of $n$ such surprises is subexponential with $n \sigma^2$
 and $b$, hence the tail for $t < \sigma^2/b$ is
$$\exp\left(-\frac{t^2}{2 n \Delta }\right)$$.
\end{proof}

\section{Conclusion}\label{sec:conc}

We obtained sharp bounds on exponential moments of the surprise when 
the distribution has a certain (fixed) Shannon entropy.
The analysis we did yields a characterization for related extremal problems involving Renyi entropy.

\printbibliography

\appendix

\section{Proof of \Cref{lemma:defficiency_from_bias}}\label{proof:defficiency_from_bias}

\begin{proof}
Consider the equation
\begin{align}\label{eq:entropy_defficiency}
H(X) = H_0(X)-\Delta
\end{align}
For $X$ as in \Cref{eq:optimal_distribution} we obtain
\begin{align*}
-\delta\log \delta - (1-\delta)\log\frac{1-\delta}{N-1} = 
\log N - \Delta
\end{align*}
which is equivalent to
\begin{align}\label{eq:1}
\Delta = \delta\log \frac{(N-1)\delta}{1-\delta} + 
\log\frac{N(1-\delta)}{N-1}
\end{align}
Introducing $\delta = \frac{1}{N}+\gamma$, we may rewrite it as
\begin{align*}
\Delta = \left(\frac{1}{N}+\gamma\right)\log\left(1+\frac{\gamma N}{1-\frac{1}{N}-\gamma} \right) + \log\left(1-\frac{N\gamma}{N-1} \right).
\end{align*}

\noindent \underline{Case 1: $\gamma N = O(1)$.}
By the Taylor expansion $\log(1+u) = u+O(u^2)$ for $u\leqslant 1$ we obtain
\begin{align*}
\Delta & = \left(\frac{1}{N}+\gamma\right)\left( 
\frac{\gamma N}{1-\frac{1}{N}-\gamma} + O(\gamma^2N^2)
\right) - \frac{N}{N-1}\gamma + O(\gamma^2) \\
& =  \left(\frac{1}{N}+\gamma\right)\left( 
\frac{\gamma N}{1-\frac{1}{N}} + O(\gamma^2N^2)
\right) - \frac{N}{N-1}\gamma + O(\gamma^2) \\
& = O(\gamma^2 N) 
\end{align*}
where in the last line we have used the fact that $\gamma = O(1/N)$.

\noindent \underline{Case 2: $\gamma N = \omega(1)$.}
Multiplying both sides of \Cref{eq:1} by $N$, and using the assumption we obtain
\begin{align*}
N\Delta = N\gamma \log N\gamma + o(N\gamma)
\end{align*}
therefore
\begin{align*}
\Delta = \Theta\left(\gamma\log N\gamma \right). 
\end{align*}
This finishes the proof \qed.
\end{proof}

\section{Codes}\label{app:python}

\lstinputlisting[language=Python]{extreme_bounds.py}

\end{document}